\tikzstyle{hybp}=[rectangle,draw,minimum size=2mm,inner sep=0.1pt]
\tikzstyle{trep}=[circle,draw,minimum size=2mm,inner sep=0.1pt]
\tikzstyle{tre}=[circle,draw,minimum size=1.5mm]
\tikzstyle{small}=[circle,draw,inner sep=0pt, minimum size=1mm]
\newcommand{\Ext}{\mathrm{LinExt}}
\newcommand{\Seq}{\mathrm{Seq}}
\newcommand{\CRSeq}{\mathrm{CR\text-Seq}}
\newcommand{\CR}{\mathrm{CR}}
\newcommand{\op}{\mathrm{op}}
\newcommand{\tw}{\mathrm{tw}}
\newcommand{\CC}{\mathcal{C}}
\renewcommand{\int}{\mathring}
\renewcommand\leq\leqslant
\renewcommand\geq\geqslant
\newcommand{\degin}{\mathrm{deg}_{\mathrm{in}}}
\newcommand{\degout}{\mathrm{deg}_{\mathrm{out}}}
\def\quot#1#2{#1/_{{\displaystyle \!#2}}}
\newcommand{\precdot}{\prec\mathrel{\mkern-5mu}\mathrel{\cdot}}
\theoremstyle{plain}
\newtheorem{thm}{Theorem}
\newtheorem{lem}[thm]{Lemma}
\newtheorem{cor}[thm]{Corollary}
\newtheorem{prop}[thm]{Proposition}
\theoremstyle{definition}
\newtheorem{rem}{Remark}
\title{Counting cherry reduction sequences is counting linear extensions (in phylogenetic tree-child networks)}
\author{%
    Tom\'as M. Coronado\textsuperscript{a,1},
    Joan Carles Pons\textsuperscript{a,1}, Gabriel Riera\textsuperscript{a,1,*}
     \\[1em]
    \textsuperscript{a} Department of Mathematics and Computer Science, \\
    Universitat de les Illes Balears, Spain \\[1em]
    \textsuperscript{1} All authors have contributed equally \\[1em]
    \textsuperscript{*} Corresponding author: gabriel.riera@uib.es
}
\date{March 2024}
\begin{document}

\maketitle

\section*{Abstract}

Orchard and tree-child networks share an important property with phylogenetic trees: they can be completely reduced to a single node by iteratively deleting cherries and reticulated cherries. As it is the case with phylogenetic trees, the number of ways in which this can be done gives information about the topology of the network. Here, we show that the problem of computing this number in tree-child networks is akin to that of finding the number of linear extensions of the poset induced by each network, and give an algorithm based on this reduction whose complexity is bounded in terms of the level of the network.

\section{Introduction}

Phylogenetics is the discipline concerned with the study of the evolutionary relationships among taxa that are supposed to evolve in a temporal series of contingent events~\cite{bonnin2019situer}, mainly organisms, genes and languages.
The primary goal is to understand and reconstruct those  relationships, elucidating the evolutionary patterns that have led to their current diversity.

Most of the fundamental goals of phylogenetics can be formulated in mathematical language and as mathematical and computational problems to be solved. In fact, ever since the time of Lamarck and Darwin, the representation of evolution itself has been historically done in terms of graphs.
Traditionally, these graphs have been rooted trees; that is: (directed) rooted acyclic graphs, without nodes allowing multiple incoming edges. The distinguished node (the root) corresponds to the most recent common ancestor of the taxa under consideration;  the leaves represent extant taxa, and interior nodes represent (only) speciation events.

However, the evolutionary history of organisms is often more complex than a simple tree-like structure. Reticulate events, such as hybridization, horizontal gene transfer, and recombination, are better represented in networks rather than in trees.  Phylogenetic networks generalize  phylogenetic trees allowing the possibility for nodes to have multiple (in the binary setting, exactly two) incoming edges. Those nodes represent the reticulation events; see Subsection~\ref{subsec:networks} for formal definitions.

A particularity of (binary) phylogenetic trees is that they can all be generated by means of iteratively adding cherries (pairs of leaves; see Subsection~\ref{subsec:orchard} for a formal definition) to a tree with only one leaf~\cite[p. 28]{semple2003phylogenetics}; but this is tantamount to saying that any particular tree can be recursively reduced to a single leaf. These reductions have an evolutionary meaning, since they are used to distill or simplify the evolutionary information contained in a phylogenetic tree into a more manageable form.  However, it is not possible, given a general phylogenetic network, to reduce it to a single node. In 2019, a new class of phylogenetic networks was introduced whose aim was to mimic precisely this property: the class of orchard, or cherry-picking, phylogenetic networks~\cite{erdos2019orchard, Janssen2021}.  Roughly speaking, these are defined as phylogenetic networks that can be reduced to a single leaf by an iterative process of reductions consisting in (1) simplifying a cherry, or (2) simplifying a reticulated cherry; see Subsection~\ref{subsec:orchard} for formal definitions. Orchard networks were initially introduced for their computational benefits: for instance, providing a polynomial-time algorithm for their reconstruction from its ``ancestral profile'' in~\cite{erdos2019orchard}, or providing a linear-time algorithm for the Network Containment problem --- a generalization of the Tree Containment problem --- for inputs of tree-child networks (a subclass of orchard networks), which is NP-complete for general phylogenetic networks, in~\cite{Janssen2021}. Moreover, orchard networks are integrated in the set of biologically relevant phylogenetic networks  (according to~\cite{van2022orchard} and~\cite{kong2022classes}).

For any given orchard network, there may be  multiple ways of reducing an orchard network to a single leaf, and this number gives information about the topology of the network. In~\cite{erdos2019orchard} the question about the feasibility of counting the number of these reductions (called cherry reduction sequences) was posed. In this paper, we study this question in the context of tree-child networks~\cite{Cardona2009}, which are a subclass of orchard networks, in fact one of the most popular classes of phylogenetic networks. They are extensively studied both for their mathematical and their computational applications; see~\cite{van2010locating,bordewich2016determining,van2014trinets}, to name a few.

In this paper, we study this problem by reducing it to that of counting linear extensions~\cite{kangas2016counting}. Given a poset $(X,\preceq)$, a linear extension is a bijection $\pi:\{1,\ldots, |X|\}\to X$ such that $x\preceq y$ implies $\pi^{-1}(x)\leq\pi^{-1}(y)$; i.e., a total order $\leq$ that is compatible with $\preceq$. Then, an algorithm is proposed whose complexity is studied in terms of the level of the input tree-child network.

The paper is organized as follows. First, we provide some needed background and notation from graph theory, phylogenetics and combinatorics in Section~\ref{sec:preliminaries}. Section~\ref{sec:lin-ext} contains the main theoretical result of this paper, which is then used at the start of Section~\ref{sec:alg} to provide an algorithm built on top of the work of~\cite{kangas2016counting}. In the same section, upper and lower bounds of the complexity of our algorithm relating the tree-width parameter and the level of a phylogenetic network are discussed in the context of determining the complexity of the algorithm without having to study its tree-width. The paper ends with a Conclusions section.

\section{Preliminaries} \label{sec:preliminaries}

\subsection{Graph theory}\label{ssec:graph-theory}

In this paper we will need some graph theory definitions that are not always used in the study of phylogenetic networks. Hence, we have opted to present them here in an independent fashion. First of all, for a directed or undirected graph $G$ we denote its sets of nodes and edges by $V(N)$ and $E(N)$, respectively, and for any directed graph $G$, we shall denote its underlying (undirected) graph by $U(G)$. In order to ease the proofs, we will not allow the existence of loops (that is, edges that begin and end in the same node).

A subgraph of an undirected graph is \emph{biconnected} when it is connected and it remains connected after removing any node from it and all arcs incident to that node. A subgraph of a directed graph $G$ is \emph{biconnected} when it is so in $U(G)$.

Given a directed graph $G$, we say that it is \emph{weakly connected} if its underlying graph, $U(G)$, is connected. If $U(G)$ is not connected, then each of the connected components of $U(G)$ is a \emph{weakly connected component} of $G$. In the same fashion, we say that a directed subgraph is \emph{strongly connected} if it is connected as a directed graph; moreover, a \emph{strong path} is a path connecting one to the other as a directed graph.

Let $G$ be a graph and $e = uv$ an edge, for some $u,v\in V(G)$. The \emph{contraction} of $e$ in $G$, denoted by $\quot G e$, is the result of subtracting $e$ and identifying $u$ and $v$ in a node $w$ that inherits all the adjacencies of both $u$ and $v$. Formally, $\quot G e$ is the \emph{quotient} $\quot G\sim$, where $\sim$ identifies $u$ and $v$ but leaves every other node alone (and then, since we do not allow loops, we remove the edge between $[u]$ and $[v]$). When applying any edge contraction to a directed graph, we always assume the result to be undirected, as it may produce conflicting edge directions.

A \emph{minor} of an undirected graph $G$ is the result of repeatedly applying the process of contraction to a subgraph of $G$~\cite{diestel2000graph}. In what follows, we shall view minors as quotients of these subgraphs, and therefore each of their nodes as an equivalence class. Notice that the nodes in each of these classes form a connected subgraph of $G$. We shall often refer to a minor of a directed graph, in which case we are refering to a minor of its underlying graph.

We denote the \emph{tree-width}~\cite{diestel2000graph} of a graph $G$ by $\tw(G)$; and if $G$ is a directed graph, $\tw(G)$ will denote the tree-width of its underlying graph $U(G)$. A precise definition of the tree-width of a graph can be found in~\cite{diestel2000graph}, but to our purposes it suffices to know that if $H$ is a minor of $G$, then $\tw(H)\leq\tw(G)$ (Lemma 12.4.1 in~\cite{diestel2000graph}) and that the tree-width of a clique $K_n$ is $n-1$~\cite{robertson1986treewidth}.

\subsection{Partial orders and reachability}

For any set, a \emph{partial order} over it is a relation $\preceq$ that is reflexive, antisymmetric and transitive. A set $X$ with such a relation is called a \emph{partially ordered set}, or \emph{poset} and given as $(X,\preceq)$ (although we often omit this and write $X$). If there exists $x\in X$ such that there is no $y\in X\setminus\{x\}$ such that $y\preceq x$, we say that $x$ is \emph{minimal}. Analogously, if there exists $x\in X$ such that there is no $y\in X\setminus\{x\}$ and $x\preceq y$, we say that $x$ is \emph{maximal}.

Given a poset $(X,\preceq)$ and $x,y \in X$, if $x \neq y$ and $x \preceq y$ then $x \prec y$. Moreover, we say that $y$ \emph{covers} $x$ and denote it by $x \precdot y$  if $x \prec y$ and there exists no other element $z \in X$ such that $x \prec z \prec y$. For any poset $X$, its \emph{cover graph} (sometimes referred to as \emph{Hasse diagram}) is a directed graph whose set of nodes is $X$ and its set of edges is $\{xy\in X\times X : x\precdot y\}$. We denote the cover graph of $X$ as $\mathcal C(X)$.

Given any poset $X$, we can define its \emph{order-dual} poset $(X, \succeq)$, also denoted by $X^\op$, which is a poset with the same underlying set endowed with a relation $\succ$ defined by the rule $x\succeq y$ if, and only if, $y\preceq x$ for all $x,y\in X$.

A \emph{linear extension} of a poset $(X, \preceq)$ is a bijection $\pi:\{1,\ldots, |X|\}\to X$ such that for all $x,y\in X$, $x\preceq y$ implies $\pi^{-1}(x)\leq \pi^{-1}(y)$. The set of all linear extensions of $X$ is denoted by $\Ext(X)$.

For any directed acyclic graph $N$ (henceforth, a DAG), any subset $X\subseteq V(N)$ is endowed with a partial order by the structure of the graph as follows: a node $v$ is said to be \emph{reachable from $u$} if there exists a path $u\leadsto v$ within $N$. Then relation of reachability ($\leadsto$) induces a partial order over any subset $X\subseteq V(N)$: for any $u,v\in V(N)$, $u\preceq v$ if, and only if, $u\leadsto v$.

\subsection{Phylogenetic networks}\label{subsec:networks}

Let $\Sigma$ be a finite set of labels. By a \emph{phylogenetic network} on $\Sigma$ we understand a rooted directed acyclic graph (rDAG) where each node of in-degree $\geq 2$ has out-degree exactly 1 and whose \emph{leaves} (i.e., its nodes of out-degree 0) are bijectively labeled in $\Sigma$~\cite{Huson}.  A \emph{phylogenetic tree} is simply a phylogenetic network without nodes of in-degree $\geq 2$. The definition of phylogenetic tree and network also forbids, for reconstructibility reasons, the existence of \emph{elementary nodes}, that is, of nodes of in-degree and out-degree both equal to 1.

Let  $N$ be a phylogenetic network on $\Sigma$. We shall denote its \emph{root} (i.e., its only node of in-degree 0) by $\rho$ and we shall always identify its leaves with their corresponding labels.
Given two nodes $u,v$ in $N$, we say that $v$ is a \emph{child} of $u$, and also that $u$ is a \emph{parent} of $v$, when $uv\in E(N)$. A node in $N$ is of \emph{tree type}, or a \emph{tree node}, when its in-degree is $\leq 1$ (thus including the root), and  a \emph{reticulation} when its in-degree is $\geq 2$ (and hence, its out-degree is 1). For any rDAG $N$, we define $T(N)$ as the set of tree nodes of $N$, i.e., the set of nodes of $N$ with in-degree at most $1$; conversely, we denote by $R(N)$ the set of its reticulation nodes.

In the context of phylogenetic networks, maximal biconnected components are often referred to as \emph{blobs}. Moreover, every blob of a phylogenetic network has one, and only one, node that is ancestor of all nodes in the blob. The \emph{level} of a phylogenetic network is the maximum number of reticulations among its blobs. Notice that each blob has at least one reticulation. A level-$k$ network is then a phylogenetic network where each of its blobs has at least $k$ reticulations.

A phylogenetic network is \emph{tree-child}~\cite{cardona2008treechild} if all its internal nodes (i.e. its nodes of out-degree $\geq 1$) have at least one child of tree type. In this paper, we restrict ourselves to \emph{binary} phylogenetic networks --- this is, networks where all internal nodes except for the root have either in-degree 1 and out-degree 2 or in-degree 2 and out-degree 1. The root will always have in-degree 0, and out-degree 2 if it is not itself a leaf. We denote that a rDAG $N$ has only one node (i.e. the root is a leaf) by $N \cong \star$.

\subsection{Orchard networks and cherry reductions}\label{subsec:orchard}

Given a phylogenetic network $N$, we denote by $\int N$ the rDAG obtained from $N$ by removing all its leaves and the edges incident to them. If $N \cong \star$, then $\int N = \emptyset$ (as the root has out-degree $0$).

Let $N$ be a phylogenetic network. A \emph{cherry} of $N$ is a subgraph of $N$ comprised of two leaves and their parent node. Analogously, \emph{reticulated cherry} of $N$ is a subgraph of $N$ formed by two leaves, a reticulation node and an internal tree node in such a way that the reticulation node is the parent of one of those leaves and the child of the internal tree node, which is in turn the parent of the other leaf (see Figure~\ref{fig:cherry-red}). A tree node is \emph{terminal} if it is the root of a cherry or a reticulated cherry.

Let $N$ be a phylogenetic network and $u \in T(\int N)$ be a terminal node. The \emph{cherry reduction} rooted on $u$, denoted by $\CR(N, u)$, is the network resulting of removing all the descendant edges and strict descendant tree nodes of $u$ from $N$.
We emphasize that we only remove all the strict descendant \emph{tree} nodes of $u$. As we can see in Figure~\ref{fig:cherry-red}, the reticulation in the right-side image is not removed, but becomes a new leaf instead.
Notice that
$\CR(N, u)$ is a subnetwork of $N$ in the sense that all the nodes and edges in $\CR(N,u)$ are nodes and edges of $N$.

An \emph{orchard network}~\cite{erdos2019orchard} is a network $N$ that can be completely reduced by the repeated application of cherry reductions; i.e., such that there exists a sequence of subnetworks of $N$, namely $(N_0,\, N_1,\, \ldots,\, N_s)$ such that $N_0 = N$, each $N_{i+1}$ is the result of a cherry reduction applied to $N_i$, and $N_s \cong \star$. Notice that any two sequences of cherry reductions of a given orchard network have the same length.
We will denote the set of all such sequences of cherry reductions of a given network $N$ by $\CRSeq(N)$. It will be important to remember that all tree-child networks are orchard networks~\cite{erdos2019orchard}.

\begin{figure}
  \centering
  \begin{tikzpicture}[thick, >=stealth, scale=0.65]
    \node (r1) at (0,0) {$\vdots$};

    \path (r1)
      ++(0,-2)  node[trep, label=above right:\scriptsize $u$] (u1) {}
      ++(-1,-1) node[trep, dashed] (x1) {}
      ++(2,0)   node[trep, dashed] (y1) {};

    \draw[->] (r1) -- (u1) {};
    \draw[->, dashed] (u1) -- (x1) {};
    \draw[->, dashed] (u1) -- (y1) {};

    \node (r2) at (8,0) {$\vdots$};

    \path (r2)
      ++(-2,-2)  node (v2) {$\cdots$}
      ++(2,0)    node[trep, label=above right:\scriptsize $u$] (u2) {}
      ++(-1,-1)  node[hybp] (H2) {}
      ++(2,0)    node[trep, dashed] (y2) {}
      ++(-2,-1)  node[trep, dashed] (x2) {};

    \draw[->] (r2) -- (u2) {};
    \draw[->] (v2) -- (H2) {};
    \draw[->, dashed] (u2) -- (H2) {};
    \draw[->, dashed] (H2) -- (x2) {};
    \draw[->, dashed] (u2) -- (y2) {};
  \end{tikzpicture}
  \caption{The two different cherry reductions: a regular cherry (left) and a reticulated cherry (right).}
  \label{fig:cherry-red}
\end{figure}
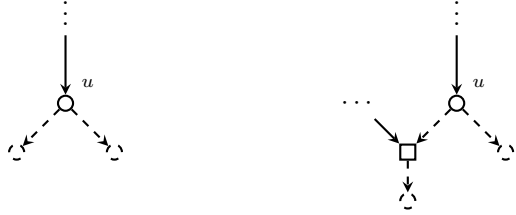

\section{Cherry reductions and linear extensions}\label{sec:lin-ext}

The aim of this section is to prove that there are as many cherry reduction sequences as there are linear extensions of the poset of internal tree nodes induced by a tree-child network (Theorem~\ref{thm:bij}). We shall begin by showing how the set of internal tree nodes is changed each time a cherry-reduction is performed.

\begin{lem} \label{lem:cr-inj}
  Let $N$ be an orchard network, $u\in T(\int N)$ a terminal node, and $N' = \CR(N, u)$. Then $T(\int N) = T(\int N') \sqcup \{u\}$ and $u \in L(N')$.
\end{lem}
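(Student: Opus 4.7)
The plan is to split the proof into two cases according to whether the terminal node $u$ is the root of a cherry or of a reticulated cherry, using the explicit description of $\CR(N,u)$ given in Subsection~\ref{subsec:orchard}. In both cases, the assertion $u\in L(N')$ will drop out immediately from the observation that the cherry reduction removes every out-edge of $u$, leaving $u$ with out-degree $0$ in $N'$. The identity $T(\int N)=T(\int N')\sqcup\{u\}$ will be obtained by a careful bookkeeping of which nodes change their tree/reticulation status and which change their internal/leaf status.

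In the cherry case, write $x,y$ for the two leaf children of $u$. The reduction removes exactly the nodes $x,y$ and the edges $ux,uy$; no in-degree in $V(N)\setminus\{x,y\}$ is affected, so $T(N')=T(N)\setminus\{x,y\}$, and the only out-degree that changes is that of $u$, giving $L(N')=(L(N)\setminus\{x,y\})\cup\{u\}$. Taking $T(\int N')=T(N')\setminus L(N')$ and using that $\{x,y\}\subseteq L(N)$ yields $T(\int N')=T(\int N)\setminus\{u\}$ by inspection.

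The reticulated cherry case is the more delicate one and is where I would be most careful. Let $y$ be the tree-leaf child of $u$, let $H$ be the reticulation child of $u$ with (unique) leaf child $x$, and let $v\neq u$ be the other parent of $H$ (which exists and is not a descendant of $u$ by acyclicity, so the edge $vH$ is preserved). The reduction deletes the tree nodes $x,y$ together with the edges $uy$, $uH$ and $Hx$. The crucial point is that two distinct features of $H$ change simultaneously: its in-degree drops from $2$ to $1$, so $H$ now lies in $T(N')$ even though $H\notin T(N)$; and its out-degree drops to $0$, so $H$ becomes a leaf of $N'$. Consequently $T(N')=(T(N)\setminus\{x,y\})\cup\{H\}$ and $L(N')=(L(N)\setminus\{x,y\})\cup\{u,H\}$. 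When one forms $T(N')\setminus L(N')$, the spurious element $H$ is added to both sides and cancels; using $\{x,y\}\subseteq L(N)$ one more time, the result collapses to $T(\int N)\setminus\{u\}$, as required.

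The main obstacle, then, is not conceptual but lies in keeping these bookkeeping changes straight in the reticulated case: forgetting that $H$ migrates from $R(N)$ to $T(N')$, or forgetting that $H$ simultaneously becomes a leaf of $N'$, would produce a spurious extra element in $T(\int N')$. Once both effects are tracked, the $H$-contributions cancel in the set-theoretic computation and the two cases give the same conclusion, establishing the lemma.
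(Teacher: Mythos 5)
Your proof is correct and follows essentially the same route as the paper's: a direct unpacking of the definition of $\CR(N,u)$, reading off that $u$ loses all out-edges (hence $u\in L(N')$) and that the only status changes among surviving nodes are $u$ becoming a leaf and, in the reticulated case, $H$ becoming a tree-node leaf, whose contributions cancel in $T(N')\setminus L(N')$. In fact your case-by-case bookkeeping is more explicit than the paper's own two-line argument, which only verifies $u\in L(N')$ and $u\in T(\int N)$ and leaves the rest of the set identity implicit.
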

\begin{proof}
    That $u\in L(N')$ springs from the fact that $N'$ is obtained from $N$ by removing all strict descendants of $u$ from $N$; thus, $\degout^{N'}(u) = 0$, and this is the definition of a leaf. Therefore, $u\notin\int N'$.

    On the other hand, $u\in T(\int N)$, by definition of $\CR$.
\end{proof}

\begin{rem}
  In particular, $\CR(N, -)$ is injective.
\end{rem}

Remember that, for any set $X$, $\mathfrak S_X$ represents the set of permutations of $X$, i.e. bijective functions $\{1,\ldots,|X|\} \to X$.

\begin{lem}\label{lem:sigma}
  Let $N$ be an orchard network. The assignment
  $$
    \sigma : \CRSeq(N) \to \mathfrak S
  $$
  that maps each $\CR$-sequence $(N_0,\ldots,N_s)$ of $N$ to the permutation $(u_0\; \cdots\; u_{s-1})$, where $N_{i+1} = \CR(N_i, u_i)$, is well-defined and injective.
\end{lem}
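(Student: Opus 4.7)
The plan is to prove both well-definedness and injectivity by a single induction on the step index $i$, leveraging Lemma~\ref{lem:cr-inj} at every stage. For well-definedness, what needs to be shown is that $(u_0, \ldots, u_{s-1})$ is really a bijection onto some natural finite set; the only sensible candidate dictated by Lemma~\ref{lem:cr-inj} is $T(\int N)$, so I expect $\sigma$ to land in $\mathfrak S_{T(\int N)}$.

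My first step is to establish by induction on $i$ that
$$
T(\int N_i) \;=\; T(\int N) \setminus \{u_0, \ldots, u_{i-1}\}
$$
for every $0 \leq i \leq s$. The base case $i=0$ is tautological. For the inductive step, $u_i$ is by the definition of a CR-sequence a terminal node of $\int N_i$, so Lemma~\ref{lem:cr-inj} gives $T(\int N_i) = T(\int N_{i+1}) \sqcup \{u_i\}$; combining with the inductive hypothesis proves the claim and also shows that $u_i$ does not coincide with any of $u_0, \ldots, u_{i-1}$. At $i = s$, we have $N_s \cong \star$ and hence $T(\int N_s) = \emptyset$, which forces $\{u_0, \ldots, u_{s-1}\} = T(\int N)$ with the $u_j$'s pairwise distinct. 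This both pins down the length $s = |T(\int N)|$ and shows that $(u_0 \; \cdots \; u_{s-1})$ is an element of $\mathfrak S_{T(\int N)}$, settling well-definedness.

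For injectivity, suppose two CR-sequences $(N_0, \ldots, N_s)$ and $(N'_0, \ldots, N'_{s'})$ have the same image under $\sigma$. Well-definedness gives $s = s' = |T(\int N)|$ and $u_i = u'_i$ for every $i$. I then induct on $i$ to show $N_i = N'_i$: both networks equal $N$ at $i=0$, and $N_i = N'_i$ together with $u_i = u'_i$ yields $N_{i+1} = \CR(N_i, u_i) = \CR(N'_i, u'_i) = N'_{i+1}$, so the two sequences coincide term by term.

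I do not anticipate a serious obstacle; the only subtlety is verifying that the hypothesis of Lemma~\ref{lem:cr-inj} (namely, that $u_i$ is a terminal node of $\int N_i$) holds at each step, but this is built into the very definition of a CR-sequence.
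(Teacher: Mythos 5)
Your proposal is correct and follows essentially the same route as the paper: both arguments rest on Lemma~\ref{lem:cr-inj} to show that the $u_i$ are distinct elements exhausting $T(\int N)$ (your explicit induction $T(\int N_i) = T(\int N)\setminus\{u_0,\ldots,u_{i-1}\}$ is just a slightly more detailed packaging of the paper's chain of inclusions, and has the small merit of actually deriving $s = |T(\int N)|$ rather than asserting it), and both prove injectivity by the identical term-by-term reconstruction $N_{i+1} = \CR(N_i,u_i) = N'_{i+1}$.
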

\begin{proof}
  For the first part we need to prove that this construction yields a permutation of $T(\int N)$.
  Let $(N_0,\ldots,N_s)$ be a $\CR$-sequence and $u_0,\ldots,u_{s-1}$ the nodes such that $N_{i+1} = \CR(N_i,u_i)$. By Lemma~\ref{lem:cr-inj}, these are uniquely determined by $(N_0,\ldots,N_s)$. In addition, since $T(\int N) \supseteq T(\int N_1) \supseteq \cdots \supseteq T(\int N_s)$ and each $u_i \in T(\int N_i)$, we have $u_i \in T(\int N)$. These are also necessarily different from each other as $u_i \in T(\int N_i) \setminus T(\int N_{i+1})$. Since $\lvert T(\int N) \rvert = s$, it follows that $(u_0\; \cdots\; u_{s-1})$ is indeed a permutation of $T(\int N)$.

  Finally, we must verify that $\sigma$ is injective. Let $S = (N_0,\ldots,N_s)$ and $S' = (N_0',\ldots,N_s')$ be two $\CR$-sequences for $N$. Note that both sequences must have the same length, $\lvert T(\int N) \rvert$. Assume that $\sigma(S) = (u_0\; \cdots u_{s-1}) = \sigma(S')$. Then $N = N_0 = N_0'$ and at each step we have $N_{i+1} = \CR(N_i, u_i) = N_{i+1}'$, which provides the proof by induction.
\end{proof}

The next lemma provides a useful and succinct relationship between the terminal nodes in $T(\int N)$ and the maximal nodes in its associated poset if $N$ is a tree-child network.

\begin{lem}\label{lem:ter=min}
  For any tree-child network $N$, for any $u\in T(\int N)$, $u$ is terminal if, and only if, $u$ is maximal in $T(\int N)$.
\end{lem}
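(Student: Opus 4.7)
The plan is to prove both implications of the equivalence. For the forward direction, I would argue directly from the definition of terminal: if $u$ is the root of a cherry or reticulated cherry, then every strict descendant of $u$ in $N$ is either a leaf or a reticulation (in the cherry case, the two children are both leaves; in the reticulated-cherry case, the descendants are a leaf, a reticulation, and that reticulation's leaf child). In neither case is any strict descendant an element of $T(\int N)$, so no $v \in T(\int N) \setminus \{u\}$ satisfies $u \leadsto v$, which is exactly maximality.

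For the reverse direction I would use the tree-child property together with the binary assumption. Let $u \in T(\int N)$ be maximal; since $u$ is an internal tree node in a binary network, it has exactly two children $c_1, c_2$. The crucial observation is that none of the $c_i$ can be an internal tree node, for then $c_i \in T(\int N)$ with $u \precdot c_i$, contradicting maximality; hence each $c_i$ lies in $L(N) \cup R(N)$. By tree-child, at least one of the children, say $c_1$, is of tree type, so $c_1$ must actually be a leaf. Now there are two subcases for $c_2$: either $c_2$ is also a leaf, in which case $u$ is the root of a cherry; or $c_2$ is a reticulation. In the latter case, $c_2$ has a unique child $d$, which by the tree-child property applied to $c_2$ must be a tree node. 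By the same maximality argument as before ($u \leadsto c_2 \leadsto d$ would put an internal tree node strictly above $u$), $d$ cannot be internal, hence $d$ is a leaf, and $u$ is the root of a reticulated cherry.

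The main obstacle is making sure every configuration is correctly ruled out in the reverse direction; in particular, one must remember to invoke tree-child twice (once to force $c_1$ to be a tree node, and once more on the potential reticulation $c_2$ to force its child to be a tree node), and to use maximality both to push leaves into place as children of $u$ and to force the grandchild $d$ through a reticulation to be a leaf. The forward direction is essentially a bookkeeping check, so the work concentrates on the reverse implication.
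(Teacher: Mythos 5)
Your proof is correct and follows essentially the same route as the paper: the forward direction by inspecting the descendants of the root of a cherry or reticulated cherry, and the reverse direction by ruling out internal tree nodes among the children via maximality and invoking the tree-child property twice (on $u$ and on the candidate reticulation child). You merely spell out a few steps the paper compresses, such as why the tree-type child must be a leaf and why the reticulation's child cannot be internal.
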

\begin{proof}
  If $u$ is terminal, then either it is the root of a cherry or that of a reticulated cherry. In both cases, it does not have any strict descendant that is internal tree node, and so it is maximal in $T(\int N)$.

  On the other hand, if $u$ is maximal in $T(\int N)$, then it has no strict descendant node that is both internal and a tree node. But since $N$ is a tree-child network, at least one of its two children must be a leaf. Two cases arise: if the other child is also a leaf, then it is the root of a cherry. If the other child is a reticulation, then by the maximality of $u$ among internal tree nodes and the fact that $N$ is tree-child, the only child of this reticulation must be a leaf. In either case, $u$ is a terminal node.
\end{proof}

\begin{rem}\label{rem:not-valid-orchards}
    Lemma~\ref{lem:ter=min} is actually false for general orchard networks. Indeed, consider the following example:
    \begin{center}
        \begin{tikzpicture}[scale=0.65, thick, >=stealth]
            \draw(1,1) node[trep, label=right:\scriptsize $u_1$] (r) {};

            \draw(0,0) node[trep, label=left:\scriptsize $u_2$] (1) {};

            \draw(3,-1) node[trep, label=right:\scriptsize $u_5$] (2) {};

            \draw(1,-1) node[trep, label=above right:\scriptsize $u_4$] (3) {};
            \draw(-1,-1) node[trep, label=left:\scriptsize $u_3$] (4) {};

            \draw(-2,-2) node[trep] (5) {};
            \draw (5) node {};
            \draw(0,-2) node[hybp, label=right:\scriptsize $h_1$] (6) {};
            \draw(2,-2) node[hybp, label=right:\scriptsize $h_2$] (7) {};
            \draw(4,-2) node[trep] (8) {};

            \draw(0,-3) node[trep] (9) {};
            \draw(2,-3) node[trep] (10) {};

            \draw[->](r)--(1);
            \draw[->](r)--(2);
            \draw[->](1)--(3);
            \draw[->](1)--(4);
            \draw[->](4)--(5);
            \draw[->](4)--(6);
            \draw[->](3)--(6);
            \draw[->](3)--(7);
            \draw[->](2)--(7);
            \draw[->](2)--(8);
            \draw[->](6)--(9);
            \draw[->](7)--(10);

        \end{tikzpicture}
    \end{center}
    This network is obviously orchard: $(u_3\ u_5\ u_4\ u_2\ u_1)$ gives a cherry reduction sequence. However, even though $u_4$ is a maximal internal tree node, there is no cherry reduction sequence that begins by it.

    This same example negates the validity of the statement of Theorem~\ref{thm:bij} below for general orchard networks.
\end{rem}

In what follows, to ease the proofs, we shall proceed with the order-dual of the poset $T(\int N)$. Indeed, since the first reduced nodes are the last nodes if we count from the root, as it is standard practice, we shall  reverse that order in the proofs below.

\begin{lem}
  Let $N$ be an orchard network. The mapping $\sigma$ in Lemma~\ref{lem:sigma} can be restricted to $\tilde\sigma : \CRSeq(N) \to \Ext(T(\int N)^\op)$, this is, the set of linear extensions of the order-dual of the poset $T(\int N)$.
\end{lem}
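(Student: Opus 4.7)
The goal is to verify that any $\CR$-sequence $S=(N_0,\ldots,N_s)$ produces, via $\sigma$, a bijection satisfying the linear extension property for $T(\int N)^\op$. Unfolding the definition of the order-dual, this boils down to showing: whenever $u_j\leadsto u_i$ in $N$ for distinct $u_i,u_j\in T(\int N)$, the descendant $u_i$ is reduced strictly earlier than $u_j$, i.e.\ $i<j$.

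The key technical step is a \emph{preservation-of-ancestry} sub-lemma: for any terminal $u\in T(\int N)$ and any $v,w\in T(\int\CR(N,u))$, if $v\leadsto w$ in $N$ then $v\leadsto w$ in $\CR(N,u)$. I would prove it by taking a path $P:v=x_0\to\cdots\to x_k=w$ in $N$ and arguing that none of its edges or intermediate vertices are deleted by $\CR(\cdot,u)$. First, $P$ cannot pass through $u$: otherwise $w$ would be a strict descendant of $u$ in $N$, but the strict descendants of a terminal node are at most two leaves and at most one reticulation (the configurations in Figure~\ref{fig:cherry-red}), none of which lie in $T(\int N)\setminus\{u\}\supseteq T(\int\CR(N,u))$. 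Second, $P$ cannot traverse any removed edge, since each such edge has head either one of the removed leaves (which are sinks, so cannot serve as intermediate vertices) or, in the reticulated-cherry case, the reticulation child of $u$, whose only removed incoming edge issues from $u$---a case already excluded. This case analysis on the very restricted structure of a terminal node is where the real content of the argument lies.

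With the sub-lemma in hand, I would finish by contradiction. Suppose $u_j$ were a strict ancestor of $u_i$ in $N$ but $j<i$. The nested inclusions $T(\int N_0)\supseteq T(\int N_1)\supseteq\cdots\supseteq T(\int N_i)$ from Lemma~\ref{lem:cr-inj} place both $u_i$ and $u_j$ in $T(\int N_m)$ for every $m\leq j$, so applying the sub-lemma once at each step $m=0,\ldots,j-1$ transports the ancestry $u_j\leadsto u_i$ from $N$ all the way into $N_j$. But $u_j$ is terminal in $N_j$, so by the same structural fact used above it has no strict descendant in $T(\int N_j)$; this contradicts $u_i\in T(\int N_j)$ being a strict descendant of $u_j$ there. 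Hence $i<j$, and $\tilde\sigma$ lands in $\Ext(T(\int N)^\op)$ as claimed.
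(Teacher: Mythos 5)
Your argument is correct, and it follows the same basic strategy as the paper's proof: assume the ancestor $v$ is reduced at step $j$ while the descendant $u$ survives to a later step, and derive a contradiction from the situation at $N_j$. The genuine difference is that you prove a step the paper only asserts. The paper's proof ends with the observation that at $N_{j+1}$ the node $v$ is a leaf while $u$ is still internal, ``which contradicts the assumption that $v\rightsquigarrow u$ in $N$''; but that contradiction is not immediate, since the reachability $v\rightsquigarrow u$ is hypothesized in $N$ whereas the terminality of $v$ only forbids internal tree descendants in $N_j$, and the intermediate reductions at $u_0,\ldots,u_{j-1}$ delete edges and nodes that could in principle sever the witnessing path. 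Your preservation-of-ancestry sub-lemma is exactly the missing check, and your case analysis is sound: a path between two surviving internal tree nodes can neither pass through the reduced terminal node (whose strict descendants are only leaves and at most one reticulation, hence outside $T(\int\CR(N,u))$) nor traverse a deleted edge (whose head is either a deleted leaf or the reticulation that is only entered, among deleted edges, from the already-excluded terminal node). The one point worth spelling out is that a removed leaf cannot be the final vertex $w$ of the path either, which holds because $w\in T(\int\CR(N,u))$ is internal. So your proposal is a more complete rendering of the paper's argument rather than a different one; what it buys is closing the gap between reachability in $N$ and reachability in $N_j$ that the published proof leaves implicit.
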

\begin{proof}
  Let $S = (N_0,\ldots,N_s) \in \CRSeq(N)$ and $\pi = \tilde\sigma(S)$. According to the previous observation, we must prove that $\pi$ is a linear extension of the dual of partial order on $T(\int N)$ induced by reachability relation of $N$. In other words, that for any two $u, v \in T(\int N)$ such that $v \rightsquigarrow u$ in $N$, then $\pi^{-1}(u) \leq \pi^{-1}(v)$. By definition, $\pi^{-1}(u)$ is the largest index $i$ such that $u \in T(\int N_i)$ (resp. $j$ for $v$). Assume that $i > j$, so that we have the following situation:
  $$
  \begin{array}{ccccccccc}
    \cdots \supseteq           & T(\int N_j)     &
    \supseteq                  & T(\int N_{j+1}) &
    \supseteq \cdots \supseteq & T(\int N_i)     &
    \supseteq                  & T(\int N_{i+1}) &
    \supseteq \cdots \\
    & \rotatebox{90}{$v \in$}    \; \rotatebox{90}{$u \in$} &
    & \rotatebox{90}{$v \notin$} \; \rotatebox{90}{$u \in$} &
    & \rotatebox{90}{$v \notin$} \; \rotatebox{90}{$u \in$} &
    & \rotatebox{90}{$v \notin$} \; \rotatebox{90}{$u \notin$} &
  \end{array}
  $$
  We can observe that, at $N_{j+1}$, $v$ is a leaf and $u$ is an internal node, which contradicts the assumption that $v \rightsquigarrow u$ in $N$.
\end{proof}

\begin{rem}
  Since $\sigma = \iota \circ \tilde\sigma$, where $\iota : \Ext(T(\int N)^\op) \hookrightarrow \mathfrak S $ is the natural inclusion, then $\tilde\sigma$ must also be injective.
\end{rem}

We shall now present Theorem~\ref{thm:bij}, the main result in this section, stating that counting cherry reduction sequences of a tree-child network is equivalent to counting the set of linear extensions of the poset $T(\int N)$.

\begin{thm}\label{thm:bij}
    Let $N$ be a tree-child network. Then, there is a bijection between the set of cherry reduction sequences of $N$ and the set of linear extensions of the poset $T(\int N)$.
\end{thm}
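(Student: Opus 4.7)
The plan is to establish the theorem by showing that the injective map $\tilde\sigma : \CRSeq(N) \to \Ext(T(\int N)^{\op})$ from the preceding lemma is in fact surjective; combined with the canonical bijection $\Ext(P) \to \Ext(P^{\op})$ given by order reversal, this yields the stated bijection with $\Ext(T(\int N))$.

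I would argue by induction on $n := |T(\int N)|$. The base case $n = 0$ forces $N \cong \star$ (otherwise the root would be an internal tree node), so both sides are singletons and $\tilde\sigma$ is trivially bijective. For the inductive step, fix $\pi \in \Ext(T(\int N)^{\op})$ and set $u_0 := \pi(1)$. Since $u_0$ occupies the first position of a linear extension of $T(\int N)^{\op}$, it must be minimal in $T(\int N)^{\op}$, equivalently maximal in $T(\int N)$. At this point Lemma~\ref{lem:ter=min} applies (and tree-childness is crucial here, as Remark~\ref{rem:not-valid-orchards} warns), giving that $u_0$ is terminal, so $N_1 := \CR(N, u_0)$ is well-defined and, by Lemma~\ref{lem:cr-inj}, satisfies $T(\int N_1) = T(\int N) \setminus \{u_0\}$. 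Two compatibility facts must then be verified before invoking the inductive hypothesis on $N_1$: (i) $N_1$ is itself tree-child, and (ii) the reachability order on $T(\int N_1)$ induced by $N_1$ coincides with the restriction of the reachability order on $T(\int N)$ induced by $N$. Granting these, the reindexed tail $\pi' : \{1, \ldots, n-1\} \to T(\int N_1)$ given by $\pi'(i) := \pi(i+1)$ is a linear extension of $T(\int N_1)^{\op}$, and by the inductive hypothesis there exists $S' \in \CRSeq(N_1)$ with $\tilde\sigma(S') = \pi'$; prepending $N$ to $S'$ produces the desired $S \in \CRSeq(N)$ with $\tilde\sigma(S) = \pi$.

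The main obstacle I anticipate is (i), the preservation of tree-childness under a single cherry reduction. I expect this to follow from a short case analysis. If $u_0$ roots a regular cherry, its two leaf children are deleted, $u_0$ becomes a tree-type leaf of $N_1$, and no other node's set of children changes; in particular, the parent of $u_0$ retains $u_0$ as a tree-type child. If $u_0$ roots a reticulated cherry with reticulation $h$, then both $u_0$ and $h$ become leaves in $N_1$ (and are therefore tree-type), so the parent of $u_0$ retains $u_0$ as a tree-type child, while the other parent of $h$ already had a tree-type child distinct from $h$ by tree-childness of $N$. Property (ii) ought to be essentially automatic: the strict descendants of $u_0$ in $N$ consist only of leaves (all removed in $N_1$) and, in the reticulated case, a single reticulation (which does not belong to $T(\int N)$), so any directed path in $N$ between two elements of $T(\int N) \setminus \{u_0\}$ must avoid the descendant set of $u_0$ and therefore survives intact in $N_1$.
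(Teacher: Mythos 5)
Your proposal is correct and follows essentially the same route as the paper: both reduce the theorem to the surjectivity of $\tilde\sigma$ and build the cherry reduction sequence by repeatedly peeling off the first element of the given linear extension, using Lemma~\ref{lem:ter=min} (maximal $\Rightarrow$ terminal) and Lemma~\ref{lem:cr-inj}. The only differences are cosmetic or to your credit: you phrase the induction on $\lvert T(\int N)\rvert$ rather than along the sequence index, and you explicitly verify the two facts the paper merely asserts, namely that $\CR(N,u_0)$ is again tree-child and that the induced reachability order on $T(\int N_1)$ is compatible with that of $N$.
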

\begin{proof}
    It will be enough to prove that $\tilde\sigma$ is surjective, and thus, by Lemma~\ref{lem:sigma}, a bijection. Indeed, consider the following commutative diagram:
    \begin{center}
      \begin{tikzcd}
        \CRSeq(N)
          \arrow[rd, "\cong", "\tilde\sigma"']
          \arrow[rr, "\sigma", tail]
        & &
        \mathfrak S  \\
        &
        \Ext(T(\int N)^\op)
          \arrow[ru, "\iota"', hook]
          \arrow[d, "\cong", "\mathrm{rev}"']
        & \\
        & \Ext(T(\int N))
        &
      \end{tikzcd}
    \end{center}
    Notice that, in this diagram, $\mathrm{rev}$ is simply the bijection that simply reverts the permutations. We can then see that $\mathrm{rev}\circ\tilde\sigma$ provides such bijection.

    Let us prove that $\tilde\sigma$ is surjective. Let $\pi = (u_0\ u_{1} \cdots u_{s-1})$, $u_{s-1} = \rho$, be a linear extension of the order-dual of the reachability order in $N$. We want to show that
    $$
      (N_0,\, N_1 = \CR(N_0,u_0),\, N_2 = \CR(N_1,u_{1}),\,\ldots,\, N_s = \CR(N_{s-1},u_{s-1}))
    $$
    is a cherry reduction sequence for $N = N_0$.

    We shall proceed by induction over $i\in\{0,\ldots, s-1\}$. If $i=0$, then $\CR(N_{0}, u_{0})$ is $N_1$. Since $\pi$ is a linear extension, then $u_0$ is maximal in $N$, and thus by Lemma~\ref{lem:ter=min}, a terminal node in $N$. Furthermore, $N_1$ is still a tree-child network and $T(\int N_1) = T(\int N_0)\setminus\{u_0\}$ (by Lemma~\ref{lem:cr-inj}).

    Now assume that the result holds up to $i\in\{0,\ldots,s-1\}$; that is, that $u_{j}$ is a terminal node of $N_j$ and that $\CR(N_j, u_{j})$ is a tree-child network for any $j \leq i$. We shall now prove that it holds for $i+1$. Now, by the induction hypothesis, $N_{i+1} = \CR(N_{i}, u_{i})$ is a tree-child network. Furthermore, $u_{i+1}$ is an internal tree node of $N_{i+1}$: indeed, since $T(\int N_{i+1}) = T(\int N_i)\setminus\{u_{i}\}$, and $u_{i+1}$ was an internal tree node of $N_i$. Now, clearly, $u_{i+1}$ is a maximal element of the poset $T(\int N_{i+1})$, because $\pi|_{T(\int N_{i+1})}$ begins by $u_{i+1}$. Therefore, it is a terminal node of $N_{i+1}$, and $\CR(N_{i+1}, u_{i+1})$ is a tree-child network whose set of internal nodes is $T(\int N_{i+1})\setminus\{u_{i+1}\}$ (Lemma~\ref{lem:cr-inj}).
\end{proof}

\section{Algorithms and complexity} \label{sec:alg}

Now that we have established that counting cherry reduction sequences in a tree-child network is the same as counting linear extensions of the induced poset of its internal tree nodes, we shall work towards giving an algorithm to compute this number.

The problem of counting linear extensions of a poset is \#P-complete in general~\cite{brightwell1991sharpcomplete}, and several algorithms have been proposed to count the number of linear extensions of posets under different restrictions~\cite{atkinson1990computing, felsner2015linear, habib1987some, kangas2016counting, li2005computing, mohring1989computationally, peczarski2004new}, or estimating it~\cite{bubley1999faster, dyer1991random}.

The tree-width of a graph has been thoroughly used in order to assess the complexity of algorithms that count the number of linear extensions. For instance, in~\cite{eiben2019counting} it is shown that the problem of counting the number of linear extensions of a poset of $p$ elements parameterized by the tree-width $\tw$ of its underlying cover graph is not fixed-parameter tractable (FPT), or in other words, there is no algorithm with complexity $f(\tw)p^{O(1)}$ for some computable function $f$. On the other hand, the level of a phylogenetic network is a commonly used parameter to asses the complexity of algorithms in phylogenetics. Usually an upper bound of the level of a network is the easier one to assess, and fortunately the tree-width and level are related by Lemma~\ref{lem:janssen}.

\begin{lem}[{\cite[Observation 2]{janssen2019treewidth}}]\label{lem:janssen}
  Let $N$ be a level-$k$ network and $r$ its number of reticulations. Then $\tw(N) \leq k + 1 \leq r + 1$.
\end{lem}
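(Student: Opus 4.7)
The plan is to split the two inequalities. The right one, $k+1\leq r+1$, is immediate: the level $k$ is by definition the maximum number of reticulations contained in a single blob of $N$, and of course this maximum is bounded above by the total reticulation count $r$. So all of the content sits in the bound $\tw(N)\leq k+1$.

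For this I would first reduce to bounding the tree-width of an individual blob. It is a standard fact that for a graph $G$ decomposed into its biconnected components (blocks) along its cut vertices, $\tw(G)=\max_{B}\tw(B)$, with a tree decomposition of $G$ obtained by gluing decompositions of each block along the block-cut tree. Since blobs of $N$ are exactly the maximal biconnected components of $U(N)$, this reduces the problem to proving $\tw(B)\leq k+1$ for each blob $B$.

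Fix a blob $B$ and let $R_B$ denote its set of reticulations, so $|R_B|\leq k$. The key observation is that $U(B)\setminus R_B$ is a forest. Indeed, in the DAG $N$ every non-reticulation has in-degree at most $1$; this property is preserved under vertex deletion, so the directed graph $B\setminus R_B$ is a disjoint union of rooted trees, and therefore its underlying undirected graph has no cycles. Once this is in hand, a classical trick finishes the job: take any width-$1$ tree decomposition of the forest $U(B)\setminus R_B$ and adjoin $R_B$ to every bag. The edges inside $R_B$ or between $R_B$ and the rest are covered because $R_B$ lies in every bag, while the edges inside $U(B)\setminus R_B$ are covered by the starting decomposition. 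Every bag has size at most $|R_B|+2$, yielding $\tw(B)\leq |R_B|+1\leq k+1$.

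The main obstacle, such as it is, is the forest claim, and this is essentially immediate from the degree bookkeeping: in a DAG, cycles in the underlying graph can only arise at vertices of in-degree $\geq 2$, so deleting all reticulations of a blob destroys every cycle in $U(B)$. Everything else is standard assembly through the block-cut tree together with the ``add-$S$-to-every-bag'' lemma.
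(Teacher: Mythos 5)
Your proposal is correct. Note, however, that the paper itself gives no proof of this statement: it is imported verbatim as Observation~2 of the cited reference (Janssen et al.), so there is no in-paper argument to compare against. Your argument is, in essence, the standard proof of that observation, and every step checks out. The reduction to blobs via the block--cut decomposition and the identity $\tw(G)=\max_B \tw(B)$ over biconnected components is standard; the forest claim is sound because any cycle in the underlying undirected graph of a DAG must contain a ``local sink'' of the cycle, i.e.\ a vertex receiving two of the cycle's edges, hence a vertex of in-degree at least $2$, so deleting the reticulation set $R_B$ of a blob indeed leaves a forest (equivalently, as you say, a disjoint union of out-trees, since every surviving vertex keeps in-degree at most $1$); and the ``add $S$ to every bag'' step gives $\tw(B)\leq 1+\lvert R_B\rvert\leq k+1$. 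Two cosmetic points you may wish to make explicit if you write this up: for a disconnected forest you must link the tree decompositions of its components into a single tree (harmless, since the intersection condition is preserved once $R_B$ sits in every bag), and trivial blocks (bridges) have tree-width $1\leq k+1$, so they do not disturb the maximum. The inequality $k\leq r$ is indeed immediate from the definition of level as a maximum over blobs.
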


We will follow the lead of~\cite{kangas2016counting}, where the authors present two algorithms to count the number of linear extensions of a poset $P$ with $p$ elements, the first one with complexity $O(2^pp)$, and the second one with complexity $O(p^{\tw+4})$, where $\tw$ is the (undirected) tree-width of the cover graph of $P$.

\begin{thm}[{\cite[Theorem 1]{kangas2016counting}}]\label{thm:kangas}
  Given a poset $P$, the number of linear extensions of $P$ can be computed in time $O(p^{\tw+4})$, where $p = \lvert P \rvert$ and $\tw$ is the tree-width of the cover graph of $P$.
\end{thm}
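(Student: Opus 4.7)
The plan is to design a dynamic-programming algorithm on a nice tree decomposition of the cover graph $\CC(P)$ of width $\tw$. The starting point is the classical recurrence: for any downward-closed set (ideal) $I\subseteq P$, let $f(I)$ denote the number of linear extensions of the subposet induced by $I$; then
\[
f(I)=\sum_{x\text{ maximal in }I} f(I\setminus\{x\}),\qquad f(\emptyset)=1.
\]
Iterating over all $2^{p}$ ideals yields the $O(2^{p}p)$ bound; the goal is to compress this enumeration by summarising each ideal locally on the bags of a tree decomposition.

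First, I would compute a nice tree decomposition $(T,\{B_t\}_{t\in V(T)})$ of $\CC(P)$ of width $\tw$, invoking Bodlaender's algorithm if one is not supplied, and letting $V_t$ denote the vertices appearing in bags of the subtree rooted at $t$. At each bag $B_t$ I would index DP states by a function $\varphi\colon B_t\to\{0,1,\ldots,p\}$, whose intended semantics is that $\varphi(x)$ is the number of elements of the current ideal lying in $V_t$ and reachable up to $x$; the table entry $D_t[\varphi]$ would then count the number of linear extensions of $I\cap V_t$ compatible with $\varphi$.

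The transitions would be handled case by case on the node type. At a \emph{leaf}, $D_t$ is initialised on the trivial assignment. At an \emph{introduce} node adding $x$, the states are extended by a choice of whether $x$ lies in the current ideal. At a \emph{forget} node for $x$, one sums over admissible positions for $x$; crucially, the tree-decomposition property guarantees that every cover relation incident to $x$ has its other endpoint in the current bag, so all constraints involving $x$ are verifiable locally. At a \emph{join} node, the two children's tables are combined by a convolution-like identity that enforces equality on the shared bag and correctly merges the counts of forgotten elements recorded under each bag vertex.

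For the complexity bookkeeping, each bag has at most $\tw+1$ elements, each taking one of $p+1$ integer values, yielding at most $(p+1)^{\tw+1}$ states per bag; a nice tree decomposition has $O(p)$ nodes, so the total state space is $O(p^{\tw+2})$, and each transition costs at most $O(p^{2})$, adding up to the claimed $O(p^{\tw+4})$ bound. I expect the join step to be the main obstacle: reconciling positional counts from the two children without double-counting or losing information is what most sharply constrains the DP encoding, and it is the place where a naive attempt would lose a factor of $p^{\tw+1}$ and fail to meet the stated bound.
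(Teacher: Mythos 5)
You are trying to prove a statement that the paper itself does not prove: Theorem~\ref{thm:kangas} is imported verbatim as Theorem~1 of Kangas et al.\ \cite{kangas2016counting}, so the only meaningful comparison is with that source, and against it your sketch falls short in exactly the place you flag yourself. The core gap is that your DP invariant is not well defined, so the join step cannot be repaired by cleverness later. You let $\varphi(x)$ be ``the number of elements of the current ideal in $V_t$ reachable up to $x$'' and let $D_t[\varphi]$ count linear extensions of $I\cap V_t$ ``compatible with $\varphi$''; but the number of linear extensions of $I\cap V_t$, and more importantly the number of ways to \emph{merge} two such partial extensions at a join node, is not a function of these counts. Merging a linear extension of $I\cap V_{t_1}$ with one of $I\cap V_{t_2}$ requires counting interleavings that keep each bag element in a consistent slot, and that count depends on the \emph{positions} of the elements of $B_t$ inside each partial extension --- positions which vary among the extensions aggregated into a single state of your encoding. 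A single bag vertex $w$ with an antichain below it on one side and an antichain above it on the other already shows that two partial extensions with identical $\varphi$ can admit different numbers of valid merges, so no convolution identity on your tables can be correct. Separately, your cost accounting charges $O(p^2)$ per transition, but a generic join over two tables of size $(p+1)^{\tw+1}$ touches $(p+1)^{2\tw+2}$ pairs of states; you give no mechanism for enumerating compatible pairs faster, which is precisely the lost factor of $p^{\tw+1}$ you mention. Locating the obstacle is not the same as overcoming it.

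The way the bound is actually achieved --- and, in essence, what the cited source does --- is to change the object being counted so that the join node becomes trivial. Relax from bijections to arbitrary order-preserving maps: for each $k$ let $\Omega(k)$ be the number of maps $f\colon P\to\{1,\dots,k\}$ with $f(x)\leq f(y)$ whenever $x\preceq y$. This constraint only needs to be checked on cover edges, so $\Omega(k)$ is a homomorphism-type count computable by a standard DP on a tree decomposition of $\CC(P)$ with at most $k^{\lvert B_t\rvert}\leq (p+1)^{\tw+1}$ states per bag, and at a join node the two children's tables are combined by \emph{pointwise multiplication} --- no interleaving ever arises, because a map, unlike a linear order, has no global positional structure to reconcile. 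Bijectivity is then restored at the end by inclusion--exclusion over the attained values: $e(P)=\sum_{k=0}^{p}(-1)^{p-k}\binom{p}{k}\Omega(k)$, since a surjective order-preserving map from a $p$-element poset onto $\{1,\dots,p\}$ is exactly a linear extension. Running the DP for the $p+1$ values of $k$ costs $O\bigl(p\cdot p\cdot p^{\tw+1}\cdot p\bigr)=O(p^{\tw+4})$. If you want to salvage your write-up, replace the ideal-based recurrence and the positional state by this relaxation; as it stands, the proposal announces the right parameters but does not contain the idea that makes the theorem true.
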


The problem arises when we want to assess the tree-width of the cover graph of the poset induced over the internal tree nodes of a tree-child network. By Lemma~\ref{lem:cover-graph-minor}, $\mathcal C(T(\int N))$ is a minor of $N$, and thus $\tw(\mathcal C(T(\int N))) \leq \tw(N)$.

Consequently, Theorem~\ref{thm:kangas} provides an algorithm that computes the number of linear extensions of a tree-child network in time $O(|T(\int N)|^{k+5})$, where $k$ is the level of $N$, and the number of internal tree nodes is the number of leaves of the network minus $1$. In other words, this provides an algorithm that runs in time $O(|\Sigma|^{\tw+4}) \subseteq O(|\Sigma|^{k+5})$.

\begin{lem}\label{lem:cover-graph-minor}
    Let $N$ be a tree-child network and $N'$ the subgraph of $N$ induced by removing, from $\int N$, all the reticulation nodes with out-degree $0$ (in $\int N$). The DAG resulting from the contraction of every edge between a reticulation node and its child tree node in $N'$ is isomorphic to the cover graph of $T(\int N)$.
\end{lem}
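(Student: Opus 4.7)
The plan is to establish the isomorphism first on vertices, and then on edges in both directions. Let $G$ denote the DAG obtained after the prescribed contraction. In $N'$ every reticulation has out-degree exactly $1$ (those with out-degree $0$ having been removed), and by the tree-child hypothesis its unique child is a tree node of $\int N$. So each contracted edge merges one reticulation with a single element of $T(\int N)$, and every equivalence class of the quotient contains exactly one tree node. This gives a canonical bijection $V(G) \cong T(\int N)$, which I will use silently from now on.

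For the direction \emph{covers appear as edges of $G$}, given $u \precdot v$ in $T(\int N)$ I would pick any directed path $u \leadsto v$ in $N$. By the cover hypothesis no interior vertex of this path lies in $T(\int N)$, so every interior vertex is a reticulation. Since each reticulation has out-degree $1$ and its unique child is a tree node, after at most one reticulation the path must return to $T(\int N)$, and therefore must land at $v$. Thus the path has length $1$ or $2$: a length-$1$ path is a direct edge $u \to v$ that survives the contraction, while a length-$2$ path $u \to h \to v$ becomes the edge $uv$ upon contracting $h \to v$. In either case $uv \in E(G)$.

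For the converse, an edge $uv$ in $G$ arises either from a direct tree-to-tree edge $u \to v$ in $N$, or from a length-two path $u \to h \to v$ through a reticulation whose second edge has been contracted; in both cases $u \prec v$. To upgrade this to a cover I would use that every tree node of $N$ has in-degree at most $1$: any hypothetical intermediate $z \in T(\int N)$ with $u \prec z \prec v$ would be forced to reach $v$ through $v$'s unique in-neighbour, and then, by examining the parents of that in-neighbour together with the binary tree-child structure, one forces $z$ to coincide with $u$. I expect this last step --- in particular excluding an intermediate tree node that could reach $v$ via the ``other'' parent of the reticulation $h$ --- to be the main obstacle of the proof, as it is where the tree-child hypothesis genuinely has to be exploited rather than just referenced.
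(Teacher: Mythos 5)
Your bijection on vertices and your argument that every cover $u \precdot v$ yields an edge of the contracted DAG $G$ are both correct and essentially identical to the paper's. The genuine problem is the converse direction, which you leave unproved and correctly flag as the crux: showing that an edge of $G$ arising from a length-two path $u \to h \to v$ through a reticulation is necessarily a cover. This step cannot be completed, because the claim is false. Consider the binary tree-child network with root $\rho$ and internal edges $\rho a$, $\rho h$, $ab$, $bh$, $hc$, where $a,b,c$ are internal tree nodes ($a$ and $b$ each with one pendant leaf, $c$ with two) and $h$ is the reticulation with parents $\rho$ and $b$. Then $T(\int N)=\{\rho,a,b,c\}$ is the chain $\rho \prec a \prec b \prec c$, whose cover graph has three edges, while the contracted DAG has the four edges $[\rho][a]$, $[a][b]$, $[b][c]$ and $[\rho][c]$, the last witnessed by $\rho h$ with $h \in [c]$. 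The intermediate node you were worried about --- one reaching $v$ through the \emph{other} parent of $h$ --- really can occur, so $uh, hv \in E(N)$ does not imply $u \precdot v$, and no appeal to the binary tree-child structure will force $z=u$.

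It is worth noting that the paper's own proof commits exactly the error you were bracing for: in the converse direction it simply asserts $u \precdot v$ in each of its three cases without argument (and its forward direction states the corresponding ``if and only if'' without proving the problematic implication). So your proposal locates the weak point accurately, but the weak point lies in the statement itself, not merely in the proofs: the contracted DAG is in general a proper supergraph of the cover graph, containing some transitive edges. What does survive --- and is all the rest of the paper needs --- is that $G$ has the same transitive closure as the reachability poset on $T(\int N)$ (your covers-to-edges argument, plus the observation that every edge of $G$ gives a strict relation, proves this), hence the same number of linear extensions, and that $G$ is still a minor of $N$, so the tree-width bound is unaffected. If you want a correct proof, prove that weaker statement instead.
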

\begin{proof}
    First, observe that, in this case, the resulting graph after edge contractions is still directed and acyclic. Now let $\varphi:\CC(T(\int N))\to\quot{N'}{\sim}$ defined by the rule $u\mapsto [u]$, where $\quot{N'}{\sim}$ is the (quotient) DAG defined in the statement of this proposition by contracting edges from reticulations to tree nodes.

    Let us first see that $\varphi$ is indeed bijective. To see that the map is surjective, notice that nodes $[u]$ in $\quot{N'}{\sim}$ can be either $\{u\}$ or $\{u, h\}$, where $h$ is a reticulation node. In both cases, $u\in T(\int N)$ because $N'$ does not contain any reticulation node with leaf children and thus the map is surjective. Now, let $u,v$ be two tree nodes; if $[u] = [v]$, then either $\{u\} = \{v\}$ or $\{u, h\} = \{v, h\}$, where $h$ is a reticulation node and thus not a tree node; in both cases, $u = v$.

    Let us now see that $\varphi$ preserves and reflects edges. Let $u,v$ be two internal tree nodes. By definition, we know that $u\precdot  v$ if, and only if, there exists a directed path in $N$ between $u$ and $v$ and there is no tree node $w\notin\{u,v\}$ such that $u \prec  w \prec  v$; or, equivalently, that either $uv\in E(N)$ or $uh, hv\in E(N)$ (since the network is tree-child) for some reticulation node $h$. In any case, this implies that $[u][v]\in E(\quot{N'}{\sim})$. Conversely, if $[u][v] \in E(\quot{N'}{\sim})$, then we have three cases for $u' \in [u], v' \in [v]$ such that $u'v' \in E(N)$:
    \begin{itemize}
      \item If $u' = u$ and $v' = v$, then $u \precdot  v$.
      \item If $u'$ is a reticulation node, then $[u] = \{u, u'\}$ and $v' = v$ (because the network is tree-child). Hence $uu', u'v \in E(N)$ and $u \precdot  v$.
      \item If $v'$ is a reticulation node, then $[v] = \{v, v'\}$ and $u' = u$ (because the network is tree-child). Hence $uv', v'v \in E(N)$ and $u \precdot  v$.
    \end{itemize} \end{proof}

The combination of the algorithm in~\cite{kangas2016counting} and Lemma~\ref{lem:cover-graph-minor} is  provided as Algorithm~\ref{alg:cherry-reds}. It uses Lemma~\ref{lem:cover-graph-minor} to first obtain $\mathcal C(T(\int N))$, and then immeditely returns the result of applying any linear extension-counting algorithm (named \emph{LEcount} as in~\cite{kangas2016counting}) to its adjacency matrix. Clearly, both the first step and obtaining the adjacency matrix have complexity at most $O(\lvert V(N) \rvert^2) = O(\lvert \Sigma \rvert^2)$, hence the algorithm's complexity is as high as the complexity of the chosen \emph{LEcount} implementation.

More precisely, the algorithm assumes that the network is provided as an adjacency list, with nodes numbered from $1$ to $\lvert V(N)\rvert$. As seen in Lemma~\ref{lem:cover-graph-minor}, the cover graph of $T(\int N)$ is obtained by contracting edges from a subgraph of $\int N$. To do this, a mapping (\textit{quot}) is constructed by iterating linearly over the nodes $u$ of $N$: leaves are discarded (\textit{quot[u]} $= -1$), internal nodes mapped to the next available index and reticulations $H$ identified with their tree-child $v$ if $v$ is internal (\textit{quot[H] = quot[v]}). Note that attention has to be paid in order not to map the same node twice, both when visited as an internal node and as the child of a reticulation. Once this mapping is constructed, the cover graph of $T(\int N)$ is easily obtained by applying it to all the edges of the input network and then passed to the linear extension-counting algorithm of choice (in the case of \emph{LEcount}, the input must be given as an adjacency matrix).

\begin{algorithm}
  \SetKwComment{Comment}{/* }{ */}
  \SetKwFor{Function}{function}{do}{end}
  \KwIn{Tree-child phylogenetic network $N$ as an adjacency list, $LEcount$ function}
  \KwOut{Number of cherry reduction sequences of $N$}

  $n \gets \lvert V(N) \rvert$\;

  \Comment{Build $\mathcal C(T(\int N))$ using Lemma~\ref{lem:cover-graph-minor}}
  $c \gets 0$\Comment*{To be $\lvert T(\int N) \rvert$}
  \Comment{First, we build a mapping from $1,\ldots,n$ to $-1,1,\ldots,c$ to remove/merge nodes as a quotient function $V(N) \to T(\int N) \sqcup \{-1\}$}
  $quot \gets [-1,\, \overset{n}\ldots,\, -1]$\;
  \For{$u \in V(N)$}{
    \Comment{$u$ is an internal tree node which has not been mapped yet}
    \If{$\degout^N(u) = 2$ and $quot[u] < 0$}{
      $c \gets c + 1$\;
      $quot[u] \gets c$\;
    }
    \Comment{$u$ is a reticulation (with a tree-child)}
    \If{$\degout^N(u) = 1$}{
      $\{v\} \gets N[u]$\;
      \Comment{Consider $u$ only if its child $v$ is internal}
      \If{$\degout^N(v) > 0$}{
        \Comment{Map $v$ first if not visited yet}
        \If{$quot[v] < 0$}{
          $c \gets c + 1$\;
          $quot[v] \gets c$\;
        }
        $quot[u] \gets quot[v]$
      }
    }
    \Comment{If $\degout^N(u) = 0$, then $u$ is a leaf, skip it}
  }
  $coverGraph \gets [\emptyset,\, \overset{c}\ldots,\, \emptyset]$\;
  \For{$u \in V(N)$}{
    \Comment{$u$ is an internal tree node}
    \If{$\degout^N(u) = 2$}{
      $coverGraph[quot[u]] \gets \{ quot[v] : v \in N[u],\, quot[v] \geq 0 \}$\;
    }
  }

  $M_{\mathcal C} \gets adjacencyMatrix(coverGraph)$\;
  \Return{LEcount($M_{\mathcal C}$)}\;
  \caption{Algorithm for counting cherry reductions using LEcount~\cite{kangas2016counting}.}
  \label{alg:cherry-reds}
\end{algorithm}

In the next section, we shall further study the relationship of tree-child networks with their tree-width. In particular, we shall see that we can build tree-child networks of any tree-width, provided (as seen in Lemma~\ref{lem:janssen}) that the level is unbounded.

\FloatBarrier

\subsection{Tree-child networks of arbitrary tree-width}\label{ssec:arb-tw}

The restriction to tree-child networks does not seem to simplify the complexity of the existing algorithms: we shall always be able to build a tree-child network whose tree-width and level are in a similar order of magnitude.

\begin{prop} \label{prop:treewidth-to-level}
  For every $n \geq 3$, there exists a level-$\frac{(n-1)(n-2)}{2}$ tree-child phylogenetic network whose underlying graph has a minor of tree nodes isomorphic to the clique $K_n$. In other words, it has tree-width at least $n$.
\end{prop}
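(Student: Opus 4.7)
The plan is to exhibit, for each $n \ge 3$, an explicit binary tree-child phylogenetic network $N_n$ with exactly $\binom{n-1}{2}$ reticulations in a single blob and with $n$ designated tree nodes $v_1, \ldots, v_n$ serving as branch-set representatives for a $K_n$ minor of $U(N_n)$.

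\emph{Construction.} Place $v_1, \ldots, v_n$ along a main spine of tree-edges. For each pair $(i, j)$ with $1 \le i < j \le n$ and $j - i \ge 2$, introduce one reticulation $h_{ij}$. To keep the network binary while each $v_j$ simultaneously acts as ancestor of $n - j - 1$ outgoing reticulations $h_{jk}$ (for $k > j + 1$) and admits $j - 2$ incoming reticulations $h_{ij}$ (for $i < j - 1$), expand the off-spine region under each $v_j$ into a small caterpillar of binary tree nodes with enough branching positions, routing one incoming reticulation per column-node and one outgoing reticulation per ladder-node. Cap every reticulation's child and every loose caterpillar end with a fresh leaf. By construction every internal node is binary and has at least one tree-child (a spine or caterpillar tree node for the internal branch nodes; a leaf for the reticulations), so $N_n$ is tree-child.

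\emph{Level.} The number of reticulations is $|\{(i,j) : 1 \le i < j \le n,\ j - i \ge 2\}| = \binom{n-1}{2} = \tfrac{(n-1)(n-2)}{2}$. Each $h_{ij}$ lies on an undirected cycle closed through the spine segment $v_i \to v_{i+1} \to \cdots \to v_j$, and any two such cycles share spine vertices, so all reticulations lie in a single blob. Thus the level of $N_n$ is exactly $\binom{n-1}{2}$.

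\emph{$K_n$ minor and tree-width.} For each $j \in \{1, \ldots, n\}$ let $V_j$ be the set of nodes of $N_n$ that are descendants of $v_j$ but not of $v_{j+1}$ (with the convention that $v_{n+1}$ does not exist, so $V_n$ contains $v_n$'s entire off-spine region). Then $V_1, \ldots, V_n$ are pairwise disjoint, each induces a connected subgraph of $U(N_n)$ (the $v_j$-rooted off-spine caterpillar is a tree, and every incoming reticulation $h_{ij}$ is attached to a column node in $V_j$), and adjacency in the minor is witnessed by the spine edges $v_j v_{j+1}$ for consecutive $V_j, V_{j+1}$ and, for $j - i \ge 2$, by the undirected edge from $h_{ij} \in V_j$ to its second parent lying on $v_i$'s outgoing ladder in $V_i$. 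This exhibits $K_n$ as a minor of $U(N_n)$, so by minor-monotonicity of tree-width and $\tw(K_n) = n-1$ the claimed lower bound follows.

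The main obstacle is the bookkeeping required to realize the caterpillar expansions at every $v_j$ while preserving both binarity and the tree-child condition simultaneously; this is routine but fiddly, and is best clarified by an explicit diagram for small $n$. A secondary check is verifying that the branch-sets $V_j$ remain connected once each incoming reticulation is absorbed, which reduces to the observation that every such $h_{ij}$ is adjacent in $U(N_n)$ to its column-parent in $V_j$.
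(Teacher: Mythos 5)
Your construction is essentially the paper's: a single-blob tree-child network with one reticulation $h_{ij}$ per pair of spine positions at distance at least two (hence $\binom{n-1}{2} = \frac{(n-1)(n-2)}{2}$ reticulations in all), caterpillar gadgets at each spine node to preserve binarity and the tree-child condition, and branch sets along the spine that contract to $K_n$ --- the paper simply writes out explicitly the ``fiddly bookkeeping'' you defer, chaining each column's reticulations through tree nodes $v_{ij}$ with leaf children $x_{ij}$ and each ladder through tree nodes $u_{j\ell}$. One small point, inherited from the statement itself: a $K_n$ minor only yields tree-width at least $n-1$, not $n$, so the final clause of the claim is off by one both in your write-up and in the paper.
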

\FloatBarrier
\begin{proof}
  We explicitly construct said phylogenetic network $N$ as follows: Consider the set of nodes
  \begin{align*}
    V = & \{ u_{ij} : 1 \leq i < j \leq n \}
          \cup \{ h_{ij} : 2 \leq i < j \leq n \} \\
        & \cup \{ v_{ij} : 2 \leq i < j-1,\, j \leq n \}
          \cup \{ x_{ij} : 2 \leq i < j-1,\, j \leq n \} \\
        & \cup \{ z_j : 1 \leq j \leq n \}
  \end{align*}
  and edges defined as
  \begin{align*}
    E_j =&\,\{ u_{j\ell} u_{j(\ell+1)} : j < \ell < n \}
           \cup \{ u_{jn} z_j \} \tag{$j = 1, 2$} \\[2ex]
    E_j =& \bigcup_{i = 2}^{j-2} \{ h_{ij} v_{ij},\, v_{ij} x_{ij},\, v_{ij} h_{(i+1)j} \}  \tag{$j = 3,\, \ldots,\, n-1$} \\
         & \cup \{ h_{(j-1)j} u_{j(j+1)} \} \\
         & \cup \{ u_{j\ell} u_{j(\ell+1)} :  j < \ell < n \}
           \cup \{ u_{jn} z_j \} \\[2ex]
    E_n =& \bigcup_{i = 2}^{n-2} \{ h_{in} v_{in},\, v_{in} x_{in},\, v_{in} h_{(i+1)n} \}
           \cup \{ h_{(n-1)n} z_n \} \\[2ex]
    E =  & \bigcup_{j=1}^n E_n
           \cup \{ u_{1j} h_{2j} : 3 \leq j \leq n \}
           \cup \{ u_{ij} h_{ij} : 2 \leq i < j \leq n \},
  \end{align*}
  which can also be visualized in Figures~\ref{fig:treeewidth-to-level-nodes12},~\ref{fig:treeewidth-to-level-noden} and~\ref{fig:treeewidth-to-level-nodej}. Figure~\ref{fig:treeewidth-to-level-examples} shows two examples of this network, with $n=3,4$.
  \begin{figure}[hpt]
    \centering
    \begin{tikzpicture}[scale=0.5, every node/.style={scale=0.8}, thick, >=stealth]
      \draw(0,0) node[trep, color=red, label=left:${u_{12}}$] (u12) {};

      \path (u12)
        ++( 1, -1) node[trep, color=blue, label=below right:$u_{23}$] (u23)  {}
        ++(-2,  0) node[trep, color=red, label=left:$u_{13}$]      (u13) {}
        ++( 1, -1) node[hybp, label=below right:$h_{23}$] (h23)  {}
        ++(-2,  0) node[trep, color=red, label=left:$u_{14}$]      (u14) {}
        ++( 1, -1) node (hddots) {\rotatebox{80}{$\ddots$}}
        ++(-2,  0) node (uddots) {\rotatebox{80}{\textcolor{red}{$\ddots$}}}
        ++(-1, -1) node[trep, color=red, label=left:$u_{1n}$]      (u1n) {}
        ++( 1, -1) node[hybp, color=orange, label=below right:$h_{2n}$] (h2n) {}
        ++(-2,  0) node[trep, color=red, label=left:$z_1$]         (z1) {};

      \draw[->] (u12) -- (u23); \draw[->, color=red] (u12) -- (u13);
      \draw[->] (u13) -- (h23); \draw[->, color=red] (u13) -- (u14);
      \draw[->] (u14) -- (hddots); \draw[->, color=red] (u14) -- (uddots);
      \draw[->, color=red] (uddots) -- (u1n);
      \draw[->] (u1n) -- (h2n); \draw[->, color=red] (u1n) -- (z1);
    \end{tikzpicture}
    \hspace{1em}
    \begin{tikzpicture}[scale=0.5, every node/.style={scale=0.8}, thick, >=stealth]
      \draw(0, 0) node[trep, color=blue, label=right:${u_{23}}$] (u23) {};

      \path (u23)
        ++(-1, 1) node[trep, color=red, label=above:$u_{12}$] (u12) {};

      \path (u23)
        ++(-1, -1) node[hybp, label=below left:$h_{23}$]    (h23)  {}
        ++( 2,  0) node[trep, color=blue, label=right:$u_{24}$]         (u24) {}
        ++(-1, -1) node[hybp, label=below left:$h_{24}$]    (h24)  {}
        ++( 2,  0) node[trep, color=blue, label=right:$u_{25}$]         (u25) {}
        ++(-1, -1) node (hddots) {$\ddots$}
        ++( 2,  0) node (uddots) {\textcolor{blue}{$\ddots$}}
        ++( 1, -1) node[trep, color=blue, label=right:$u_{2n}$]         (u2n) {}
        ++(-1, -1) node[hybp, color=orange, label=below left:$h_{2n}$]    (h2n) {}
        ++( 2,  0) node[trep, color=blue, label=right:$z_2$]            (z2) {};

      \draw[->] (u12) -- (u23);
      \draw[->] (u23) -- (h23); \draw[->, color=blue] (u23) -- (u24);
      \draw[->] (u24) -- (h24); \draw[->, color=blue] (u24) -- (u25);
      \draw[->] (u25) -- (hddots); \draw[->, color=blue] (u25) -- (uddots);
      \draw[->, color=blue] (uddots) -- (u2n);
      \draw[->] (u2n) -- (h2n); \draw[->, color=blue] (u2n) -- (z2);
    \end{tikzpicture}
    \caption{Nodes adjacent to the paths $u_{12}\cdots u_{1n}z_1$ (red) and $u_{23}\cdots u_{2n}z_2$ (blue).}
    \label{fig:treeewidth-to-level-nodes12}
  \end{figure}
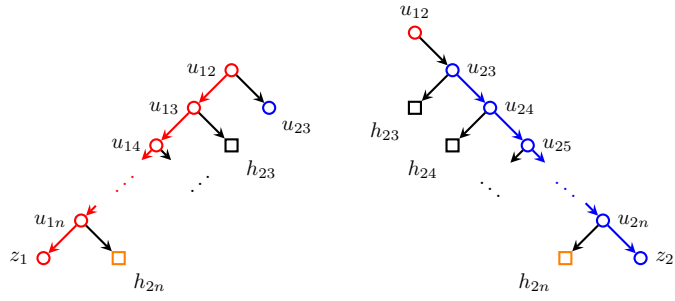
  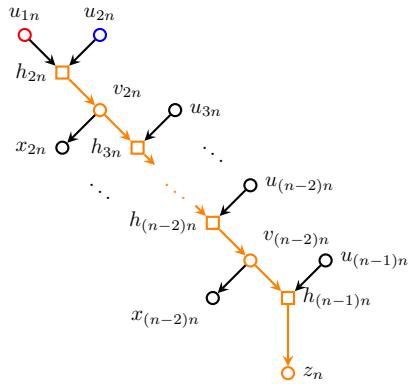
\begin{figure}[hptb]
    \centering
    \begin{tikzpicture}[scale=0.5, every node/.style={scale=0.8}, thick, >=stealth]
      \draw(0,0) node[trep, color=orange, label=right:$z_n$] (zn) {};

      \path (zn)
        ++( 0,  2) node[hybp, color=orange, label=right:$h_{(n-1)n}$]       (hn-1n) {}
        ++(-2,  0) node[trep, label=below left:$x_{(n-2)n}$]  (xn-2n) {}
        ++( 1,  1) node[trep, color=orange, label=above right:$v_{(n-2)n}$] (vn-2n) {}
        ++(-1,  1) node[hybp, color=orange, label=left:$h_{(n-2)n}$]        (hn-2n) {}
        ++(-3,  1) node (xddots) {$\ddots$}
        ++( 2,  0) node (vddots) {\textcolor{orange}{$\ddots$}}
        ++(-1,  1) node[hybp, color=orange, label=left:$h_{3n}$]      (h3n) {}
        ++(-2,  0) node[trep, label=left:$x_{2n}$]            (x2n) {}
        ++( 1,  1) node[trep, color=orange, label=above right:$v_{2n}$]     (v2n) {}
        ++(-1,  1) node[hybp, color=orange, label=left:$h_{2n}$]            (h2n) {}
        ++(-1,  1) node[trep, color=red, label=above:$u_{1n}$]           (u1n) {}
        ++( 2,  0) node[trep, color=blue, label=above:$u_{2n}$]           (u2n) {};

      \path (v2n)    ++( 2,  0) node[trep, label=right:$u_{3n}$] (u3n) {};
      \path (h3n)    ++( 2,  0) node (uddots) {$\ddots$};
      \path (vddots) ++( 2,  0) node[trep, label=right:$u_{(n-2)n}$] (un-2n) {};
      \path (vn-2n)  ++( 2,  0) node[trep, label=right:$u_{(n-1)n}$] (un-1n) {};

      \draw[->] (u1n) -- (h2n); \draw[->] (u2n) -- (h2n);
      \draw[->, color=orange] (h2n) -- (v2n);
      \draw[->] (v2n) -- (x2n);
      \draw[->, color=orange] (v2n) -- (h3n); \draw[->] (u3n) -- (h3n);
      \draw[->, color=orange] (h3n) -- (vddots);
      \draw[->, color=orange](vddots) -- (hn-2n); \draw[->] (un-2n) -- (hn-2n);
      \draw[->, color=orange] (hn-2n) -- (vn-2n);
      \draw[->] (vn-2n) -- (xn-2n);
      \draw[->, color=orange] (vn-2n) -- (hn-1n); \draw[->] (un-1n) -- (hn-1n);
n-1
      \draw[->, color=orange] (hn-1n) -- (zn);
    \end{tikzpicture}
    \caption{Nodes adjacent to the path $h_{2n} \cdots h_{(n-1)n} z_n$ (highlighted in orange).}
    \label{fig:treeewidth-to-level-noden}
  \end{figure}
  \begin{figure}[hpb]
    \centering
    \begin{tikzpicture}[scale=0.5, every node/.style={scale=0.8}, thick, >=stealth]
      \draw(0,0) node[trep, color=magenta, label=right:${u_{j(j+1)}}$] (ujj+1) {};

      \path (ujj+1)
        ++( 0,  2) node[hybp, color=magenta, label=right:$h_{(j-1)j}$]       (hj-1j) {}
        ++(-2,  0) node[trep, label=below left:$x_{(j-2)j}$]  (xj-2j) {}
        ++( 1,  1) node[trep, color=magenta, label=above right:$v_{(j-2)j}$] (vj-2j) {}
        ++(-1,  1) node[hybp, color=magenta, label=left:$h_{(j-2)j}$]        (hj-2j) {}
        ++(-3,  0) node (xddots) {$\ddots$}
        ++( 2,  1) node (vddots) {\textcolor{magenta}{$\ddots$}}
        ++(-1,  1) node[hybp, color=magenta, label=left:$h_{3j}$]        (h3j) {}
        ++(-2,  0) node[trep, label=below left:$x_{2j}$]  (x2j) {}
        ++( 1,  1) node[trep, color=magenta, label=above right:$v_{2j}$] (v2j) {}
        ++(-1,  1) node[hybp, color=magenta, label=left:$h_{2j}$]        (h2j) {}
        ++(-1,  1) node[trep, color=red, label=above:$u_{1j}$]       (u1j) {}
        ++( 2,  0) node[trep, color=blue, label=above:$u_{2j}$]       (u2j) {};

      \path (v2j)    ++( 2,  0) node[trep, label=right:$u_{3j}$] (u3j) {};
      \path (h3j)    ++( 2,  0) node (uddots) {$\ddots$};
      \path (vddots) ++( 2,  0) node[trep, label=right:$u_{(j-2)j}$] (uj-2j) {};
      \path (vj-2j)  ++( 2,  0) node[trep, label=right:$u_{(j-1)j}$] (uj-1j) {};

      \draw[->] (u1j) -- (h2j); \draw[->] (u2j) -- (h2j);
      \draw[->, color=magenta] (h2j) -- (v2j);
      \draw[->] (v2j) -- (x2j);
      \draw[->, color=magenta] (v2j) -- (h3j); \draw[->] (u3j) -- (h3j);
      \draw[->, color=magenta] (h3j) -- (vddots);
      \draw[->, color=magenta](vddots) -- (hj-2j); \draw[->] (uj-2j) -- (hj-2j);
      \draw[->, color=magenta] (hj-2j) -- (vj-2j);
      \draw[->] (vj-2j) -- (xj-2j);
      \draw[->, color=magenta] (vj-2j) -- (hj-1j); \draw[->] (uj-1j) -- (hj-1j);

      \draw[->, color=magenta] (hj-1j) -- (ujj+1);

      \path (ujj+1)
        ++(-1, -1) node[hybp, label=below left:$h_{j(j+1)}$] (hjj+1) {}
        ++( 2,  0) node[trep, color=magenta, label=right:$u_{j(j+2)}$]      (ujj+2) {}
        ++(-1, -1) node[hybp, label=below left:$h_{j(j+2)}$] (hjj+2) {}
        ++( 2,  0) node[trep, color=magenta, label=right:$u_{j(j+3)}$]      (ujj+3) {}
        ++(-1, -1) node (hddots) {$\ddots$}
        ++( 2,  0) node (uddots) {\textcolor{magenta}{$\ddots$}}
        ++( 1, -1) node[trep, color=magenta, label=right:$u_{jn}$]      (ujn) {}
        ++(-1, -1) node[hybp, color=orange, label=below left:$h_{jn}$] (hjn) {}
        ++( 2,  0) node[trep, color=magenta, label=right:$z_j$]         (zj) {};

      \draw[->] (ujj+1) -- (hjj+1); \draw[->, color=magenta] (ujj+1) -- (ujj+2);
      \draw[->] (ujj+2) -- (hjj+2); \draw[->, color=magenta] (ujj+2) -- (ujj+3);
      \draw[->] (ujj+3) -- (hddots); \draw[->, color=magenta] (ujj+3) -- (uddots);
      \draw[->, color=magenta] (uddots) -- (ujn);
      \draw[->] (ujn) -- (hjn); \draw[->, color=magenta] (ujn) -- (zj);
    \end{tikzpicture}
    \caption{Nodes adjacent to the path $h_{2j} \cdots h_{(j-1)j} u_{j(j+1)} \cdots u_{jn} z_j$ (highlighted in magenta) for $j = 3,\ldots,n-1$.}
    \label{fig:treeewidth-to-level-nodej}
  \end{figure}
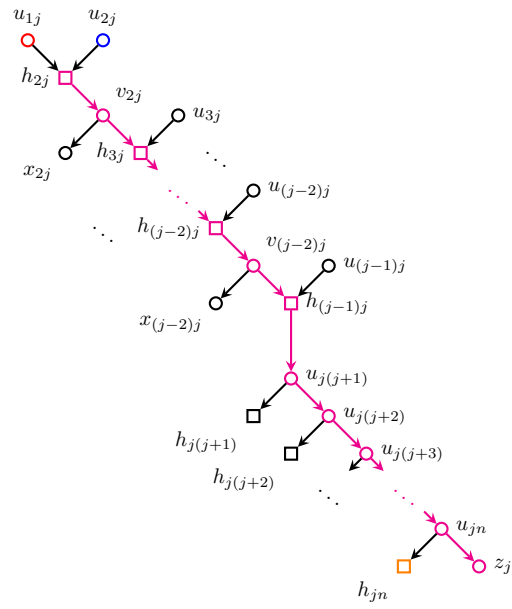

  It is trivial to verify that this is a tree-child phylogenetic network on the set of nodes labeled as $x_{ij}$ and $z_j$ and root $u_{12}$. This network $N$ consists of a single blob with exit reticulation $h_{(n-1)n}$ and its total number of reticulations is
  \begin{align*}
    \lvert \{ h_{ij} : 2 \leq i < j \leq n \} \rvert
      &= \sum_{j=3}^n \lvert \{ h_{ij} : 2 \leq i < j \} \rvert \\
      &= \sum_{j=3}^n (j - 2)
      = \sum_{j=1}^{n-2} j
      = \frac{(n-1)(n-2)}{2}.
  \end{align*}
  Moreover, removing all leaves of the form $x_{ij}$ and edges $v_{ij} x_{ij}$ incident to them, and contracting the pairwise disjoint paths (highlighted in Figures~\ref{fig:treeewidth-to-level-nodes12},~\ref{fig:treeewidth-to-level-nodej} and~\ref{fig:treeewidth-to-level-noden})
  \begin{align*}
    & u_{12} \cdots u_{1n} z_1 & \\
    & u_{23} \cdots u_{2n} z_2 & \\
    & h_{2j} \cdots h_{(j-1)j} u_{j(j+1)} \cdots u_{jn} z_j & \quad \text{for $j = 3,\ldots,n-1$} \\
    & h_{2n} \cdots h_{(n-1)n} z_n &
  \end{align*}  yields an underlying graph which is isomorphic to $K_n$. To prove this, observe that this yields a graph with $n$ nodes, namely $[z_i]$ for $i = 1, \ldots n$. Note that there are edges from each $[z_i]$ to all other $[z_j]$ for $2 \leq i < j \leq n$: these are witnessed by the edge $u_{ij} h_{ij}$ connecting $u_{ij} \in [z_i]$ and $h_{ij} \in [z_j]$. In addition, there are also edges from $[z_1]$ to each other $[z_j]$ for $2 < j \leq n$ via the edge $u_{1j} h_{2j}$ from $u_{1j} \in [z_1]$ to $h_{2j} \in [z_j]$. Finally, there is an edge from $[z_1]$ to $[z_2]$ via the edge $u_{12} u_{23}$.
\end{proof}
\FloatBarrier

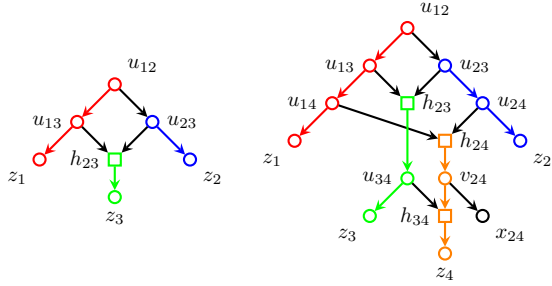
\begin{figure}[hpt]
  \centering
  \raisebox{-0.5\height}{
    \begin{tikzpicture}[scale=0.5, every node/.style={scale=0.8}, thick, >=stealth]
      \draw(0,0) node[trep, color=red, label=above right:${u_{12}}$] (u12) {};

      \path (u12)
        ++(-1, -1) node[trep, color=red, label=left:$u_{13}$]     (u13)  {}
        ++( 2,  0) node[trep, color=blue, label=right:$u_{23}$]    (u23) {}
        ++(-1, -1) node[hybp, color=green, label=left:$h_{23}$]     (h23)  {}
        ++( 2,  0) node[trep, color=blue, label=below right:$z_2$] (z2) {};

      \path (u13)
        ++(-1, -1) node[trep, color=red, label=below left:$z_1$] (z1) {};

      \path (h23)
        ++(0, -1) node[trep, color=green, label=below:$z_3$] (z3) {};

      \draw[->] (u12) -- (u23); \draw[->, color=red] (u12) -- (u13);
      \draw[->] (u13) -- (h23); \draw[->, color=red] (u13) -- (z1);
      \draw[->] (u23) -- (h23); \draw[->, color=blue] (u23) -- (z2);
      \draw[->, color=green] (h23) -- (z3);
    \end{tikzpicture}
  }
  \raisebox{-0.5\height}{
    \begin{tikzpicture}[scale=0.5, every node/.style={scale=0.8}, thick, >=stealth]
      \draw(0,0) node[trep, color=red, label=above right:${u_{12}}$] (u12) {};

      \path (u12)
        ++(-1, -1) node[trep, color=red, label=left:$u_{13}$]     (u13) {}
        ++( 2,  0) node[trep, color=blue, label=right:$u_{23}$]    (u23) {}
        ++(-1, -1) node[hybp, color=green, label=right:$h_{23}$]     (h23) {}
        ++( 2,  0) node[trep, color=blue, label=right:$u_{24}$]    (u24) {}
        ++(-1, -1) node[hybp, color=orange, label=right:$h_{24}$]    (h24) {}
        ++( 2,  0) node[trep, color=blue, label=below right:$z_2$] (z2) {};

      \path (u13)
        ++(-1, -1) node[trep, color=red, label=left:$u_{14}$] (u14) {}
        ++(-1, -1) node[trep, color=red, label=below left:$z_1$] (z1) {};

      \path (h23)
        ++( 0, -2) node[trep, color=green, label=left:$u_{34}$]    (u34) {}
        ++(-1, -1) node[trep, color=green, label=below left:$z_3$] (z3)  {}
        ++( 2,  0) node[hybp, color=orange, label=left:$h_{34}$]    (h34) {}
        ++( 0, -1) node[trep, color=orange, label=below:$z_4$] (z4)  {};

      \path (h24)
        ++(0, -1) node[trep, color=orange, label=right:$v_{24}$]       (v24) {}
        ++(1, -1) node[trep, label=below right:$x_{24}$] (x24) {};

      \draw[->] (u12) -- (u23); \draw[->, color=red] (u12) -- (u13);
      \draw[->] (u13) -- (h23); \draw[->, color=red] (u13) -- (u14);
      \draw[->] (u14) -- (h24); \draw[->, color=red] (u14) -- (z1);
      \draw[->] (u23) -- (h23); \draw[->, color=blue] (u23) -- (u24);
      \draw[->] (u24) -- (h24); \draw[->, color=blue] (u24) -- (z2);

      \draw[->, color=green] (h23) -- (u34); \draw[->, color=green] (u34) -- (z3);
      \draw[->] (u34) -- (h34);

      \draw[->, color=orange] (h24) -- (v24); \draw[->] (v24) -- (x24);
      \draw[->, color=orange] (v24) -- (h34); \draw[->, color=orange] (h34) -- (z4);
    \end{tikzpicture}
  }
  \caption{The networks constructed in Lemma~\ref{prop:treewidth-to-level} for $n=3$ and $n=4$.}
  \label{fig:treeewidth-to-level-examples}
\end{figure}

This is a necessary result: for it could (hypothetically) be the case that the topological properties of a tree-child network forbade its tree-width to attain high values (which is, for instance, what happens with trees: its tree-width is constant). However, we have proved that the tree-width of a tree-child network is in general unbounded.

\subsection{Tree-width and level of a tree-child network}\label{ssec:minors-tw}

In the previous section we showed that there was a tree-child phylogenetic network with level $O(n^2)$ such that its tree-width was at least $n$, for any $n \geq 3$. In this last section we shall travel in the opposite direction, and show that if a phylogenetic network contains a minor with a fixed number of nodes and edges, then it is possible to find a lower bound of its level.
This is useful to us, because a very widespread characterization of graph theoretical concepts (and indeed the tree-width) is done in terms of forbidden minors; i.e., minors whose presence in a graph determines whether it has a property or not. In the case of the tree-width, for instance, we know that a graph has tree-width less than $3$ if, and only if, it does not contain any minor isomorphic to $K_4$~\cite[p. 327]{diestel2000graph}.

\begin{lem}\label{lem:ind-step}
    Let $H$ be a weakly connected subgraph of a phylogenetic network, with $d$ nodes of in-degree $0$. Then, there are at least $d-1$ reticulation nodes in $H$.
\end{lem}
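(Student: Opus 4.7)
My plan is to prove this by a double-counting argument on the edges of $H$, using only weak connectivity and the in-degree constraints inherited from the ambient phylogenetic network. Let $n = \lvert V(H)\rvert$, $m = \lvert E(H)\rvert$, and split $V(H)$ into the tree nodes $T_H$ and reticulations $R_H$ of $N$ that lie in $H$, with cardinalities $t$ and $r$. Because $H$ is weakly connected and contains no loops, $U(H)$ is a connected (loopless) graph, so it has at least $n-1$ edges; since the passage to $U(H)$ does not decrease the edge count, $m \geq n - 1 = t + r - 1$.

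Next, I would compute $m$ from the opposite side as $\sum_{v \in V(H)} \degin^H(v)$. The key observation is that $\degin^H(v) \leq \degin^N(v)$ for every $v$, which is $\leq 1$ for tree nodes of $N$ and equal to $2$ for reticulations in the binary setting of the paper. Writing $d_T$ (resp.\ $d_R$) for the number of tree nodes (resp.\ reticulations) in $H$ with in-degree $0$ in $H$, so that $d = d_T + d_R$, each tree node outside this set contributes at most $1$, and each reticulation outside this set contributes at most $2$, giving
$$
  m \;=\; \sum_{v \in T_H} \degin^H(v) + \sum_{v \in R_H} \degin^H(v)
       \;\leq\; (t - d_T) + 2(r - d_R).
$$

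Chaining the two inequalities yields $t + r - 1 \leq t + 2r - d_T - 2 d_R$, i.e.\ $d_T + 2 d_R \leq r + 1$, which rearranges to
$$
  r \;\geq\; d_T + 2 d_R - 1 \;=\; (d_T + d_R) + d_R - 1 \;\geq\; d - 1,
$$
as desired. The argument is essentially arithmetic once the edge-counting setup is in place, so I do not expect a serious obstacle; the one point that needs care is making sure the inequality $m \geq n-1$ is legitimately applied to the directed graph $H$, which is fine because the paper explicitly forbids loops and the phylogenetic networks under consideration are simple, so the number of directed edges of $H$ equals the number of edges of its underlying simple connected graph $U(H)$.
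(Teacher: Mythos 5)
Your proof is correct, and it takes a genuinely different route from the paper's. The paper proceeds by induction on $\lvert E(H)\rvert$: it picks a source $u$, deletes the reticulation-free path from $u$ down to the first reticulation below it, and then tracks how the sources and the reticulations are redistributed among the weakly connected components of what remains. You replace this recursive decomposition by a single global double count: weak connectivity gives $\lvert E(H)\rvert \geq \lvert V(H)\rvert - 1$, while summing in-degrees, bounded by $1$ per tree node and by $2$ per reticulation, gives $\lvert E(H)\rvert \leq (t - d_T) + 2(r - d_R)$; chaining the two yields $r \geq d_T + 2d_R - 1 \geq d - 1$. This is where binarity of the network enters your argument, but that is fair game --- the paper's own inductive step also relies on it (a reticulation losing one incoming edge ceases to be a reticulation), and the paper remarks explicitly that these results use binarity. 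Your version is shorter, avoids the somewhat delicate bookkeeping of the components $H_i'$ and of the identity $d_1 + \cdots + d_s = d-1$ in the printed proof, and actually establishes the slightly stronger bound $r \geq d + d_R - 1$, in which sources of $H$ that are reticulations of $N$ count twice. The one point you rightly flag --- applying $\lvert E\rvert \geq \lvert V\rvert - 1$ to the directed edge count --- is indeed harmless: $H$ is a subgraph of a DAG, so it has no antiparallel pairs, and even for a connected multigraph the bound on the number of edges would still hold, so the argument does not depend on simplicity of $U(H)$.
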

\begin{proof}
    We shall proceed by induction over the number of edges of $H$. If that number is $1$, then there is exactly one node with in-degree $0$ and there is nothing to prove.

    Assume now that the statement in this proposition is true for $H$ with up to $m$ edges, let $D_0(H)$ be the set of all nodes with in-degree $0$, and $d = |D_0(H)|$. Assume as well that $d\geq2$, for if $d = 1$ there is nothing to prove. Let $u$ be a node with in-degree $0$, and $h$ a reticulation node such that it descends from $u$ in $H$ and there is no other reticulation node between $u$ and $h$. Let $H'$ be the subgraph obtained from $H$ by deleting the path from $u$ to $h$ except for $h$; i.e., deleting all the nodes in that path different from $h$ and all the edges with either endpoint in them.
    Notice that this path is a strong path and all the nodes in it except $h$ is a tree node.

    Now, $H'$ need not be weakly connected. Let $H_1',\ldots, H_s'$ be the weakly connected components of $H'$, each with $d_i$ nodes of $D_0(H)$. Let $vw$ be a bridge edge from a node in the path $u\rightsquigarrow h$, say $v$, to a node in $H_i'$, say $w$; since $v$ is a tree node, the direction goes from $v$ to $w$. Two cases arise:
    \begin{itemize}
        \item If $w$ is a tree node in $H$, then $H_i'$ now has a new node with in-degree $0$. In this case, $H_i'$ will have at least $d_i+1$ nodes with in-degree $0$ and thus a minimum of $d_i$ reticulation nodes by the induction hypothesis.
        \item If $w$ is a reticulation node in $H$, it will not be so in $H_i'$. In this case $H_i'$ would have at least $d_i$ nodes with in-degree $0$ and therefore at least $d_i-1$ reticulation nodes ($d_i$ in $H$).
    \end{itemize}
    By definition, $d_1 + \cdots + d_s = d-1$ (because we have removed $u$ from $H$). Therefore, since each $H_i'$ contains at least $d_i$ reticulation nodes of $H$, we conclude that $H$ has at least $d-1$ reticulation nodes.
\end{proof}

Given a minor of a phylogenetic network $N$ (which is, remember, an undirected graph), we can endow it with an orientation induced by $N$ as follows: if $\alpha\beta$ (undirected) is an edge in such a minor, then there exist $u\in\alpha$ and $v\in\beta$ such that either $uv$ or $vu$ (directed) is an edge in $N$. Say, for instance, that $uv$ is the case; then, endow the original edge with the direction from $\alpha$ to $\beta$. Note that this does not prevent the existence of multiple orientations, but the following result is true nevertheless.

\begin{cor}\label{cor:ind-step}
    Let $\quot H\sim$ be a minor of a phylogenetic network $N$, endowed with an orientation induced by $N$. Let $\alpha_1,\ldots, \alpha_d,\,\beta$ be nodes in $\quot H\sim$ such that there exist edges $\alpha_1\beta, \ldots, \alpha_d\beta$. Then, there exist at least $d-1$ reticulation nodes in $\beta$.
\end{cor}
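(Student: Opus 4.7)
The plan is to reduce the corollary to Lemma~\ref{lem:ind-step} by constructing a weakly connected subgraph of $N$ tailored to $\beta$. For each $i \in \{1, \ldots, d\}$, the oriented edge $\alpha_i \to \beta$ in $\quot H\sim$ supplies a representative edge $u_iv_i \in E(N)$ with $u_i \in \alpha_i$ and $v_i \in \beta$, and since the $\alpha_i$ are pairwise distinct equivalence classes the $u_i$ are pairwise distinct nodes of $N$. I would then let $H_\beta$ be the subgraph of $N$ whose vertex set is $\beta \cup \{u_1, \ldots, u_d\}$ and whose edges are the $u_iv_i$ together with every edge of $N$ having both endpoints in $\beta$. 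Because $\beta$ induces a weakly connected subgraph of $N$ (it is an equivalence class of a minor) and each $u_i$ is attached to it through $u_iv_i$, the subgraph $H_\beta$ is weakly connected.

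Next I would observe that each $u_i$ has in-degree $0$ in $H_\beta$, since its only incident edge there is the outgoing $u_iv_i$. Hence $H_\beta$ has at least $d$ in-degree-$0$ nodes, and Lemma~\ref{lem:ind-step} furnishes at least $d - 1$ reticulation nodes of $N$ lying in $V(H_\beta) = \beta \cup \{u_1, \ldots, u_d\}$. The task that remains is to show that these reticulations all live in $\beta$ and never among the $u_i$.

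The main obstacle is exactly this last step, since a $u_i$ may well be a reticulation of $N$ even when it is a source of $H_\beta$. To deal with it I would revisit the inductive proof of Lemma~\ref{lem:ind-step}: every reticulation $h$ that the proof certifies is \emph{strictly} downstream of an in-degree-$0$ node of the current subgraph, so $h$ is never itself a source. The $u_i$ are permanent sources throughout the recursion---they have no incoming edges in $H_\beta$ and deleting $u \rightsquigarrow h$ paths cannot create any---so no $u_i$ is ever picked as an $h$. Moreover, any directed path leaving a source of $H_\beta$ stays inside $\beta$ after its first edge: from $u_i$ the only outgoing edge lands at $v_i \in \beta$, and from a node of $\beta$ all outgoing edges of $H_\beta$ have their heads in $\beta$. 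It follows that every reticulation produced by the lemma lies in $\beta$, giving the desired $d - 1$ reticulations.
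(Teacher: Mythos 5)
Your proof is correct and follows essentially the same route as the paper's: build the weakly connected subgraph of $N$ on $\beta\cup\{u_1,\ldots,u_d\}$ using one representative edge $u_iv_i$ per $\alpha_i\beta$, observe that the $u_i$ are sources so the subgraph has at least $d$ in-degree-$0$ nodes, and invoke Lemma~\ref{lem:ind-step}. Your only addition is to justify explicitly why the $d-1$ certified reticulations cannot be among the $u_i$ (every reticulation produced by the lemma lies strictly downstream of a source, and all outgoing edges of the subgraph land in $\beta$), a point the paper simply asserts; this is a welcome clarification rather than a divergence.
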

\begin{proof}
    By definition, there exist $u_1\in\alpha_1, \ldots,\, u_d\in\alpha_d$ and $v_1,\ldots,v_d\in\beta$ such that $u_1v_1,\ldots, u_dv_d$ are edges in $N$. Notice as well that the nodes in $\beta$ induce a weakly connected subgraph of $N$ (Section 1.7 in~\cite{diestel2000graph}). Consider this subgraph, together with the nodes $u_1,\ldots, u_d$ and the edges $u_1v_1,\ldots, u_dv_d$. We now fall under the hypotheses of Lemma~\ref{lem:ind-step} and deduce that there are at least $d-1$ reticulations in this graph, none of which can be $u_1,\ldots,u_d$ and thus must be in $\beta$.
\end{proof}

We  will now provide a lower bound of the level of a phylogenetic network given the number of nodes and edges of one of its minor. Since we have not been able to find a suitable reference in the literature, we present here the result with its proof.

\begin{prop}\label{prop:minor-to-level}
  Let $N$ be a phylogenetic network and $H$ any biconnected minor of $N$. Then the level of $N$ is at least $|E(H)| - |V(H)| + 1$.
\end{prop}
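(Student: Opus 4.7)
The plan is to identify $|E(H)| - |V(H)| + 1$ with the cyclomatic number of the connected graph $H$, localize $H$ inside a single blob of $N$, and identify the cyclomatic number of that blob with its reticulation count.

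First, I would argue that $H$ is a minor of $U(B)$ for some blob $B$ of $N$. Fix any realization of $H$ as a minor of $U(N)$: pairwise disjoint connected subgraphs $\{S_\beta\}_{\beta \in V(H)}$ of $U(N)$ and connecting edges $\{e_{\alpha\beta}\}_{\alpha\beta \in E(H)} \subseteq E(U(N))$. Each cycle in $H$ lifts (via the $e_{\alpha\beta}$'s and paths inside the $S_\beta$'s) to a cycle in $U(N)$. Since $H$ is biconnected, any two edges of $H$ lie on a common cycle, so the lifts of any two connecting edges lie on a common cycle of $U(N)$ and therefore on the same blob $B$. Invoking the standard graph-theoretic fact that a biconnected minor of $G$ is a minor of a single block of $G$, I may assume that $H$ is a minor of $U(B)$.

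Next, I would take an inflation of $H$ inside $B$ with each $S_\beta$ chosen as a spanning tree, producing a weakly connected subgraph $N_H$ of $B$ with $|E(N_H)| - |V(N_H)| = |E(H)| - |V(H)|$. Because $N_H$ is a DAG (being a subgraph of $N$), it has at least one source. Since $N$ is binary, every vertex of $N_H$ has in-degree $0$, $1$, or $2$ in $N_H$, with the in-degree-$2$ ones being necessarily reticulations of $N$. A standard double-count on $\sum_{v \in V(N_H)} \degin^{N_H}(v) = |E(N_H)|$ then gives that the number of in-degree-$2$ vertices of $N_H$ equals (number of sources of $N_H$) $+ |E(N_H)| - |V(N_H)|$, which is at least $|E(H)| - |V(H)| + 1$. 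Every such vertex is a reticulation of $N$ sitting in $V(B)$, so $B$ contains at least $|E(H)| - |V(H)| + 1$ reticulations of $N$, and hence the level of $N$ is at least $|E(H)| - |V(H)| + 1$.

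The main obstacle will be the single-blob reduction: although the fact that a biconnected minor of $G$ comes from a single block of $G$ is standard graph-theoretic folklore, some care is required to argue that, once all lifted connecting edges lie in the blob $B$, one can redefine the $S_\beta$'s entirely within $V(B)$ while keeping them pairwise disjoint and connected. The remaining steps are routine double-counting exploiting the binary structure of $N$.
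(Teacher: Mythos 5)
Your proposal is correct, and while it shares the overall skeleton of the paper's argument (localize $H$ to a single blob, then show that the cyclomatic excess $|E(H)|-|V(H)|+1$ forces that many reticulations inside that blob), the counting step is carried out by a genuinely different and more self-contained mechanism. The paper orients the minor $H$ itself via chosen witness edges of $N$, applies Corollary~\ref{cor:ind-step} node by node to conclude that each $\beta\in V(H)$ contains at least $\degin(\beta)-1$ reticulations, and sums; Corollary~\ref{cor:ind-step} in turn rests on Lemma~\ref{lem:ind-step}, which is proved by an induction on the number of edges. You instead inflate $H$ to a directed subgraph $N_H$ of the blob with tree branch sets, so that $|E(N_H)|-|V(N_H)|=|E(H)|-|V(H)|$, and run a single handshake count $\sum_v \degin^{N_H}(v)=|E(N_H)|$: binarity forces every in-degree to be $0$, $1$ or $2$, whence (number of in-degree-$2$ vertices) $=$ (number of sources) $+|E(N_H)|-|V(N_H)|\geq |E(H)|-|V(H)|+1$, and each such vertex is a reticulation of $N$ in the blob. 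This buys a shorter proof that dispenses with Lemma~\ref{lem:ind-step} and Corollary~\ref{cor:ind-step} altogether --- indeed your identity, applied to a weakly connected subgraph (where $|E|\geq|V|-1$), immediately recovers Lemma~\ref{lem:ind-step} in the binary setting --- at the cost of not isolating those statements as reusable tools. Your source count $n_0\geq 1$ plays exactly the role of the paper's choice of a class $[r]$ with $\degin^H([r])=0$. The one step you rightly flag, that a biconnected minor lives inside a single block and hence a single blob, is treated with essentially the same brevity in the paper itself (it is asserted in one parenthetical), so you are not relying on anything the published proof does not also rely on; spelling out the standard cut-vertex argument would strengthen both versions equally.
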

\begin{proof}
  Let $G$ be the underlying graph of $N$.
  If there is such a minor of $H$, it must be contained in a blob $B$ of $N$ (as they are maximal biconnected components by definition).
  In addition, recall that any minor of the underlying graph of $N$ is obtained by removing nodes, edges and contracting edges from $G$. This process is then described by the quotient of a subgraph $H'$ of $G$ via an equivalence relation $\sim$ (Proposition 1.7.1 in~\cite{diestel2000graph}). Then, the nodes of such a minor $H = \quot{H'}\sim$ are equivalence classes of $V(H')$, and there is an edge $[u][v] \in E(H)$ whenever there is an edge $u'v' \in E(H')$ or $v'u' \in E(H')$ for some $u' \in [u]$ and $v' \in [v]$. For each edge $[u][v]$ in $H$, choose such an edge in $H'$ in order to endow $[u][v]$ with an orientation and obtain a directed graph with underlying graph $H$. Moreover, one can choose some $r \in V(H')$ with no incoming edges from nodes in $H'$ (this is possible because $N$ is a DAG) and therefore there exists some orientation such that $\degin^H([r]) = 0$.

  By Corollary~\ref{cor:ind-step} we have that each node $\beta \in V(H)$ (i.e. equivalence class of weakly connected nodes in $H'$) in this directed graph with at least one incoming edge contains at least $\degin(\beta) - 1$ reticulation nodes of $N$. Summing over all such nodes $\beta \in V(H)$ we obtain a lower bound of the number of reticulations of the blob and thus the level of $N$:
  \begin{align*}
    |R(B)|
      &\geq \sum_{\substack{\beta \in V(H) \\ \degin(\beta) \geq 1}} (\degin(\beta) - 1) \\
      &= \sum_{\beta \in V(H)} \degin(\beta) - \lvert \{ \beta \in V(H) : \degin(\beta) \geq 1 \} \rvert \\
      &\geq |E(H)| - |V(H)| + 1.
  \end{align*}
\end{proof}

\begin{rem}
     Notice that the number $|E(H)| - |V(H)| + 1$ would be equal to the number of reticulations of $H$ if $H$ was a directed binary network. However, we have defined our minors to be undirected and not necessarily binary; in the binary case it is, then, the number of reticulations of a minor endowed with any orientation.
\end{rem}

\begin{rem}
  In the previous three results, we have thoroughly used the fact that our networks are binary.
\end{rem}

In particular, the following two results are useful to relate the level of a phylogenetic network to is tree-width. In Section~\ref{ssec:graph-theory} we saw that if a phylogenetic network had a minor isomorphic to a clique $K_n$, then its tree-width was at least $n-1$.

\begin{cor}
  If a phylogenetic network has a minor isomorphic to the clique $K_n$, then its level is at least $\frac{(n-1)(n-2)}{2}$.
\end{cor}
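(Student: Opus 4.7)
The plan is to obtain this as an immediate application of Proposition~\ref{prop:minor-to-level}, the main work being to verify that $K_n$ satisfies the biconnectedness hypothesis and to perform the counting that turns the generic lower bound into the stated quadratic expression. Since Proposition~\ref{prop:minor-to-level} already does the structural heavy lifting (relating the level to $|E(H)|-|V(H)|+1$ for any biconnected minor), the corollary should amount to little more than a substitution.

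First I would observe that for $n\geq 3$ the clique $K_n$ is biconnected: removing any single node from $K_n$ leaves $K_{n-1}$, which is still connected, so $K_n$ remains connected after the deletion of any vertex together with its incident edges. The case $n\leq 2$ is vacuous for our bound, since $\frac{(n-1)(n-2)}{2}=0$ there, and any level is nonnegative. Hence without loss of generality we may assume $n\geq 3$, and $K_n$ is then a biconnected minor of $N$ to which Proposition~\ref{prop:minor-to-level} applies.

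Next, I would substitute $|V(K_n)|=n$ and $|E(K_n)|=\binom{n}{2}=\frac{n(n-1)}{2}$ into the inequality from Proposition~\ref{prop:minor-to-level}, yielding
\[
  \text{level}(N)\;\geq\;|E(K_n)|-|V(K_n)|+1\;=\;\frac{n(n-1)}{2}-n+1\;=\;\frac{n^2-3n+2}{2}\;=\;\frac{(n-1)(n-2)}{2},
\]
which is exactly the claimed bound. There is essentially no obstacle here; the only subtlety worth flagging is that Proposition~\ref{prop:minor-to-level} is stated for biconnected minors, so the brief justification of biconnectedness of $K_n$ (for $n\geq 3$) is the single nontrivial verification. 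Everything else is direct arithmetic on the number of edges and vertices of the complete graph.
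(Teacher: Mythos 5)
Your proof is correct and matches the paper's intent exactly: the paper states this corollary without proof as an immediate consequence of Proposition~\ref{prop:minor-to-level}, and your substitution of $|E(K_n)|=\binom{n}{2}$ and $|V(K_n)|=n$, together with the check that $K_n$ is biconnected for $n\geq 3$, is precisely the omitted computation.
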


A similar result is given for the \emph{grid}. A $n\times m$ grid is the graph on the set of nodes $\{1,\ldots,n\}\times\{1,\ldots,m\}$ whose set of edges is
$$
\{(i,j)(i',j') : |i-i'| + |j-j'| = 1\}.
$$
A graph with a minor isomorphic to a $n\times m$ grid has tree-width at least $\min\{n,m\}$~\cite[p. 356]{diestel2000graph}. In the following corollary, we give a lower bound for a phylogenetic network with a minor isomorphic to a $n\times m$ grid.

\begin{cor}
  If a phylogenetic network has a minor isomorphic to the $n\times m$ grid or the $n\times m$ complete bipartite graph, then its level is at least $(n-1)(m-1)$.
\end{cor}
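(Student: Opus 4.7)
The plan is to reduce the statement to a direct application of Proposition~\ref{prop:minor-to-level}, which provides the lower bound $|E(H)|-|V(H)|+1$ on the level of $N$ whenever $H$ is a biconnected minor. Since this quantity is purely combinatorial, the argument reduces to (i) checking that both the $n\times m$ grid and the complete bipartite graph $K_{n,m}$ are biconnected, and (ii) performing the vertex/edge counts that show both yield exactly $(n-1)(m-1)$.

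First I would dispatch biconnectedness. For the grid, one can verify it by observing that removing any node leaves the remainder connected via the standard horizontal/vertical paths on the rectangular lattice (the corner, edge and interior cases are each immediate). For $K_{n,m}$ (with $n,m\geq 2$, which is the only interesting range), removing any node still leaves a complete bipartite graph $K_{n-1,m}$ or $K_{n,m-1}$, which is connected. Since we may harmlessly assume $n,m\geq 2$ (otherwise the stated bound $(n-1)(m-1)$ is zero and the conclusion is vacuous), both minors satisfy the hypotheses of Proposition~\ref{prop:minor-to-level}.

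Next I would carry out the two counts. For the $n\times m$ grid, $|V(H)|=nm$, and the edge set splits into $n(m-1)$ horizontal and $(n-1)m$ vertical edges, so
\[
|E(H)|-|V(H)|+1 \;=\; n(m-1)+(n-1)m - nm + 1 \;=\; nm - n - m + 1 \;=\; (n-1)(m-1).
\]
For $K_{n,m}$, $|V(H)|=n+m$ and $|E(H)|=nm$, giving
\[
|E(H)|-|V(H)|+1 \;=\; nm - n - m + 1 \;=\; (n-1)(m-1).
\]
Applying Proposition~\ref{prop:minor-to-level} to either minor then yields the desired bound.

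I do not anticipate a real obstacle here: the only non-routine ingredient has already been done in Proposition~\ref{prop:minor-to-level}, and the two computations above coincide essentially because both graphs have the same cyclomatic number $(n-1)(m-1)$. The one small subtlety worth noting in the write-up is the degenerate case $n=1$ or $m=1$, where the graphs fail to be biconnected but the bound is trivially $0$; I would mention this briefly so the appeal to Proposition~\ref{prop:minor-to-level} is unambiguous.
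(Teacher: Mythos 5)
Your proposal is correct and is exactly the argument the paper intends: the corollary is stated without proof as an immediate application of Proposition~\ref{prop:minor-to-level}, and your biconnectedness checks and the counts $|E(H)|-|V(H)|+1=(n-1)(m-1)$ for both the grid and $K_{n,m}$ fill in precisely the omitted routine details. The remark about the degenerate cases $n=1$ or $m=1$ is a sensible addition but changes nothing essential.
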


Given any forbidden minor of a particular value of the tree-width, Proposition~\ref{prop:minor-to-level} provides a way of finding a lower bound to the level a phylogenetic network in order for it to present that tree-width.

\section{Conclusions}

In this manuscript, we have reduced the problem of counting the number of cherry reduction sequences in tree-child networks to the well-studied problem of counting the number of linear extensions of their reachability order on tree nodes (Section~\ref{sec:lin-ext}), for which there exist several algorithms whose complexity depends on the tree-width of the underlying cover graph. In particular, the problem becomes $O(p^{\tw}$), where $p$ is the number of elements in the poset and $\tw$ is its tree-width. In this regard, we have shown (Section~\ref{ssec:arb-tw}) that the fact that our networks are tree-child does not imply the existence of a fixed upper bound on their tree-width. However, the tree-width of a phylogenetic network is bounded by its level. In this regard, we have given a lower bound of the level (and thus, of the tree-width) of a tree-child phylogenetic network  given the number of nodes and edges of a minor (Section~\ref{ssec:minors-tw}), which can be easily applied to any set of forbidden minors for a given tree-width.

This does not mean, however, that the problem of counting linear extensions in a tree-child network of a given level is as difficult as that of counting linear extensions in a general poset whose cover graph has the same level. The possibility remains open (at least, theoretically) for the existence of an algorithm specific to tree-child networks that exploits some topological features in order to find a quicker way to count their linear extensions. As future work, we would like to study what kind of posets do tree-child networks define and whether computing the number of linear extensions of these particular posets is a substantially easier problem than that of general posets.

On a more practical note, we want to underline that this connection also provides a way to polynomially approximate the number of cherry reductions by leveraging the results of~\cite{bubley1999faster} and~\cite[p.~2]{dyer1991random} on approximating the number of linear extensions of certain posets.
In conclusion, we hope that this manuscript can point the phylogenetist community to the tools developed to solve a larger problem.

\vskip1ex

\noindent \sloppy \textbf{Acknowledgements.} This research was partially supported by the Spanish Ministry of Economy and Competitiveness and European Regional Development Fund project PID2021-126114NB-C44 funded by MCIN/AEI/10.13039/501100011033 and by “ERDF A way of making Europe.”

\printbibliography

\end{document}

\hrule

Let $N$ be a tree-child network. Then, let us define $\Pi:\Seq(N)\to\Ext(T(\int N))$ by this rule: for every $s = (N_0,N_1,\ldots, N_k)$, $N = N_0$ and $N_k \cong \star$, where every $N_{i+1}$ is obtained from $N_i$ by reducing an internal tree node $u$, we consider the ordered sequence of such internal tree nodes. \textit{C'est-à-dire}, if $s = (N_0,\, \CR(N_0,u_k),\, \CR(N_1,u_{k-1}),\,\ldots,\, \CR(N_{k-1},u_1))$, then $\Pi(s) = (u_k,\ldots, u_1, \rho)$.

\begin{lem}\label{lem:pi-well-defn}
    For every orchard network $N$, $\Pi:\Seq(N)\to\Ext(T(\int N))$ is well defined.
\end{lem}
\begin{proof}
    Let us see that this map is well defined; that is, that $\Pi(s)$ is indeed a linear extension of the ordering induced to $T(\int N)$ by the topology of the network.
    For every $u_i$ in $\Pi(s)$, let $\pi(u_i) = k-i$. Then, it suffices to show that if $u\rightsquigarrow v$, then $\pi(u) \leq \pi(v)$ (the equality being reserved for the case in which $u = v$), and that $\pi$ is indeed a bijection between $T(\int N)$ and $[|T(\int N)|]$.

    BLABLEBLIBLOBLUH
\end{proof}

\begin{prop}\label{prop:pi-bij}
    For every tree-child network $N$, $\Pi:\Seq(N)\to\Ext(T(\int N))$ is a bijection.
\end{prop}
\begin{proof}
    We need only to prove the surjectivity of $\Pi$, since it is clear by construction that $\Pi$ is injective.

    Let $\pi = (u_k, u_{k-1}, \ldots, u_1, u_0)$, $u_0 = \rho$, be a linear extension of the order in $N$. We want to show that $(N_0, r(N_0,u_k), r(N_1,u_{k-1}),\ldots, r(N_{k-1},u_1))$ is a cherry reduction sequence for $N = N_0$; i.e., that $u_i$ is a terminal node of $N_{k-i}$.

    We shall proceed by induction over $i$. If $i=0$, then $N_0$ is $N$. Since $\pi$ is a linear extension, then $u_k$ is minimal in $N$, and thus by Lemma~\ref{lem:ter=min}, a terminal node in $N$. Furthermore, $N_1 = r(N_0,u_k)$ is an tree-child network and $T(\int N_1) = T(\int N_0)\setminus\{u_k\}$.

    Now assume that the result holds up to $i$; that is, that $u_{k-j}$ is a terminal node of $N_j$ and that $r(N_j, u_{k-j})$ is an tree-child network for any $j \leq i$, where $T(\int N_{j+1}) = T(\int N_j)\setminus\{u_{k-j}\}$. We shall now prove that it holds for $i+1$. Now, by the induction hypothesis, $N_{i+1} = r(N_{i}, u_{k-i})$ is an tree-child network. Furthermore, $u_{k-i-1}$ is an internal tree node of $N_{i+1}$: indeed, since $T(\int N_{i+1}) = T(\int N_i)\setminus\{u_{k-i}\}$, and $u_{k-i-1}$ was an internal tree node of $N_i$. Now, clearly, $u_{k-i-1}$ is a minimal element of the poset $T(\int N_{i+1})$, because $\pi|_{T(\int N_{i+1})}$ begins with $u_{k-i-1}$. Therefore, it is a terminal node of $N_{i+1}$, and $r(N_{i+1}, u_{k-i-1})$ is an tree-child network whose set of internal nodes is $T(\int N_{i+1})\setminus\{u_{k-i-1}\}$.